\documentclass[journal]{IEEEtran}
\PassOptionsToPackage{table}{xcolor}
\usepackage{amsmath,amsfonts}
\usepackage{graphicx}
\usepackage{cite}
\usepackage{amssymb}
\usepackage{booktabs}
\usepackage{enumerate}
\usepackage{pgfplots}
\usepackage{mathtools}
\usepackage{pstool}
\usepackage{psfrag}
\usepackage[]{algpseudocode}
\newcommand{\algmargin}{\the\ALG@thistlm}
\usepackage[ruled,vlined,linesnumbered,lined,commentsnumbered]{algorithm2e}
\usepackage[printonlyused,withpage]{acronym}
\usepackage[bookmarks=false]{hyperref}
\usepackage{caption}
\usepackage{subcaption}
\usepackage{xcolor}
\usepackage[bottom]{footmisc}

\SetKwInOut{Input}{Input}
\SetKwInOut{Output}{Output}

\pgfplotsset{compat=1.17}
\acrodef{KPI}[KPI]{key performance indicator}
\acrodef{WRT}[w.r.t.]{with respect to}
\acrodef{WP}[w.p.]{with probability}
\acrodef{LHS}[l.h.s.]{left hand side}
\acrodef{RHS}[r.h.s.]{right hand side}

\acrodef{2D}[2D]{two-dimensional}
\acrodef{3D}[3D]{three-dimensional}
\acrodef{3GPP}[3GPP]{3rd Generation Partnership Project}
\acrodef{AOA}[AOA]{angle-of-arrival}
\acrodef{AWGN}[AWGN]{additive white Gaussian noise}
\acrodef{BS}[BS]{base station}
\acrodef{BFT}[BFT]{beam focusing and tracking}
\acrodef{CFR}[CFR]{channel frequency response}
\acrodef{CIR}[CIR]{channel impulse response}
\acrodef{CTRV}[CTRV]{\textit{Constant Turn Rate and Velocity}}
\acrodef{DDPG}[DDPG]{deep deterministic policy gradient}
\acrodef{DRL}[DRL]{deep reinforcement learning}
\acrodef{ELAA}[ELAA]{extremely large aperture array}
\acrodef{FF}[FF]{far-field}
\acrodef{FR3}[FR3]{frequency range~3}
\acrodef{ISAC}[ISAC]{integrated sensing and communication}
\acrodef{KF}[KF]{Kalman filter}
\acrodef{LMS}[LMS]{least means square}

\acrodef{LOS}[LOS]{line-of-sight}
\acrodef{LS}[LS]{least squares}
\acrodef{LSTM}[LSTM]{long short-term memory}
\acrodef{MAP}[MAP]{maximum a posteriori}
\acrodef{MC}[MC]{Monte Carlo}
\acrodef{MDP}[MDP]{Markov decision process}
\acrodef{MIMO}[MIMO]{multiple-input multiple-output}
\acrodef{ML}[ML]{maximum likelihood}
\acrodef{MLE}[MLE]{maximum likelihood estimation}
\acrodef{MMSE}[MMSE]{minimum mean square error}
\acrodef{MSE}[MSE]{mean square error}
\acrodef{MUI}[MUI]{multi-user interference}
\acrodef{MUSIC}[MUSIC]{multiple signal classification}
\acrodef{NF}[NF]{near-field}
\acrodef{NLOS}[NLOS]{non-line-of-sight}
\acrodef{NN}[NN]{neural network}
\acrodef{OU}[OU]{Ornstein-Uhlenbeck}
\acrodef{PDP}[PDP]{power delay profile}
\acrodef{RCS}[RCS]{radar cross section}
\acrodef{RF}[RF]{radio frequency}
\acrodef{RFID}[RFID]{radio frequency identification}
\acrodef{RFS}[RFS]{random finite set}
\acrodef{RI}[RI]{range information}
\acrodef{RL}[RL]{reinforcement learning}
\acrodef{RMSE}[RMSE]{root-mean-square error}
\acrodef{ReNN}[regression NN]{regression neural network}
\acrodef{ROI}[ROI]{region of interest}
\acrodef{RTT}[RTT]{round-trip time}
\acrodef{RV}[RV]{random variable}
\acrodef{SI}[SI]{soft information}
\acrodef{SIR}[SIR]{signal-to-interference ratio}
\acrodef{SLAM}[SLAM]{simultaneous localization and mapping}
\acrodef{SMC}[SMC]{sequential monte carlo}
\acrodef{SNR}[SNR]{signal-to-noise ratio}
\acrodef{SINR}[SINR]{signal-to-interference-plus-noise ratio}
\acrodef{TDOA}[TDOA]{time difference-of-arrival}
\acrodef{TOA}[TOA]{time-of-arrival}
\acrodef{TOF}[TOF]{time-of-flight}
\acrodef{UAV}[UAV]{unmanned aerial vehicles}
\acrodef{UE}[UE]{user equipment}
\acrodef{UKF}[UKF]{unscented Kalman filter}
\acrodef{UWB}[UWB]{ultra-wideband}
\acrodef{WLS}[WLS]{weighted least squares}
\acrodef{xG}[xG]{Next-generation}
\acrodef{AFDM}[AFDM]{affine frequency division multiplexing}
\acrodef{PD}[PD]{photodetector}
\acrodef{LO}[LO]{local oscillator}
\acrodef{w.r.t.}[w.r.t.]{with respect to}
\acrodef{AFT}[AFT]{affine Fourier transform}
\acrodef{RAQRs}[RAQRs]{Rydberg atomic quantum receivers}
\acrodef{RAQR}[RAQR]{Rydberg atomic quantum receiver}
\acrodef{DAFT}[DAFT]{discrete affine Fourier transform}
\acrodef{NLS}[NLS]{nonlinear least square}
\acrodef{FFT}[FFT]{fast Fourier transform}
\acrodef{ICI}[ICI]{intercarrier interference}
\acrodef{OTFS}[OTFS]{orthogonal time frequency space}
\acrodef{OFDM}[OFDM]{orthogonal frequency division multiplexing}
\acrodef{DD}[DD]{delay-Doppler}
\acrodef{SQL}[SQL]{standard quantum limit}
\acrodef{IDAFT}[IDAFT]{inverse DAFT}
\acrodef{NRMSE}[NRMSE]{normalized root mean square error}
\acrodef{RIS}[RIS]{reconfigurable intelligent surface}
\acrodef{LEO}[LEO]{low earth orbit}
\acrodef{MC-AFDM}[MC-AFDM]{multi-chirp affine frequency division multiplexing}
\acrodef{SC-AFDM}[SC-AFDM]{single-chirp AFDM}
\acrodef{OMP}[OMP]{orthogonal matching pursuit}
\acrodef{FIM}[FIM]{Fisher information matrix}
\acrodef{PDF}[PDF]{probability density function}
\acrodef{EFIM}[EFIM]{equivalent Fisher information matrix}
\acrodef{CRLB}[CRLB]{Cram{\'e}r-Rao lower bound}
\acrodef{DC-AFDM}[DC-AFDM]{dual-chirp AFDM}
\acrodef{DD}[DD]{doubly-dispersive}

\begin{document}
\newtheorem{theorem}{\bf Theorem}
\newtheorem{acknowledgement}[theorem]{Acknowledgement}
\newtheorem{axiom}[theorem]{Axiom}
\newtheorem{case}[theorem]{Case}
\newtheorem{claim}[theorem]{Claim}
\newtheorem{conclusion}[theorem]{Conclusion}
\newtheorem{condition}[theorem]{Condition}
\newtheorem{conjecture}[theorem]{Conjecture}
\newtheorem{criterion}[theorem]{Criterion}
\newtheorem{definition}{Definition}
\newtheorem{exercise}[theorem]{Exercise}
\newtheorem{lemma}{Lemma}
\newtheorem{corollary}{Corollary}
\newtheorem{notation}[theorem]{Notation}
\newtheorem{problem}[theorem]{Problem}
\newtheorem{proposition}{Proposition}
\newtheorem{solution}[theorem]{Solution}
\newtheorem{summary}[theorem]{Summary}
\newtheorem{assumption}{Assumption}
\newtheorem{example}{\bf Example}
\newtheorem{remark}{\bf Remark}

\newtheorem{thm}{Corollary}[section]
\renewcommand{\thethm}{\arabic{section}.\arabic{thm}}

\def\qed{$\Box$}
\def\QED{\mbox{\phantom{m}}\nolinebreak\hfill$\,\Box$}
\def\proof{\noindent{\emph{Proof:} }}
\def\poof{\noindent{\emph{Sketch of Proof:} }}
\def
\endproof{\hspace*{\fill}~\qed
\par
\endtrivlist\unskip}
\def\endproof{\hspace*{\fill}~\qed\par\endtrivlist\vskip3pt}

\def\E{\mathsf{E}}
\def\eps{\varepsilon}
\def\Lsp{{\boldsymbol L}}
\def\Bsp{{\boldsymbol B}}
\def\lsp{{\boldsymbol\ell}}
\def\Ltsp{{\Lsp^2}}
\def\Lpsp{{\Lsp^p}}
\def\Linsp{{\Lsp^{\infty}}}
\def\LtR{{\Lsp^2(\Rst)}}
\def\ltZ{{\lsp^2(\Zst)}}
\def\ltsp{{\lsp^2}}
\def\ltZt{{\lsp^2(\Zst^{2})}}
\def\ninN{{n{\in}\Nst}}
\def\oh{{\frac{1}{2}}}
\def\grass{{\cal G}}
\def\ord{{\cal O}}
\def\dist{{d_G}}
\def\conj#1{{\overline#1}}
\def\ntoinf{{n \rightarrow \infty}}
\def\toinf{{\rightarrow \infty}}
\def\tozero{{\rightarrow 0}}
\def\trace{{\operatorname{trace}}}
\def\ord{{\cal O}}
\def\UU{{\cal U}}
\def\rank{{\operatorname{rank}}}
\def\acos{{\operatorname{acos}}}

\def\SINR{\mathsf{SINR}}
\def\SNR{\mathsf{SNR}}
\def\SIR{\mathsf{SIR}}
\def\tSIR{\widetilde{\mathsf{SIR}}}
\def\Ei{\mathsf{Ei}}
\def\l{\left}
\def\r{\right}
\def\lb{\left\{}
\def\rb{\right\}}

\setcounter{page}{1}

\newcommand{\eref}[1]{(\ref{#1})}
\newcommand{\fig}[1]{Fig.\ \ref{#1}}

\def\bydef{:=}
\def\ba{{\mathbf{a}}}
\def\bb{{\mathbf{b}}}
\def\bc{{\mathbf{c}}}
\def\bd{{\mathbf{d}}}
\def\bee{{\mathbf{e}}}
\def\bff{{\mathbf{f}}}
\def\bg{{\mathbf{g}}}
\def\bh{{\mathbf{h}}}
\def\bi{{\mathbf{i}}}
\def\bj{{\mathbf{j}}}
\def\bk{{\mathbf{k}}}
\def\bl{{\mathbf{l}}}
\def\bm{{\mathbf{m}}}
\def\bn{{\mathbf{n}}}
\def\bo{{\mathbf{o}}}
\def\bp{{\mathbf{p}}}
\def\bq{{\mathbf{q}}}
\def\br{{\mathbf{r}}}
\def\bs{{\mathbf{s}}}
\def\bt{{\mathbf{t}}}
\def\bu{{\mathbf{u}}}
\def\bv{{\mathbf{v}}}
\def\bw{{\mathbf{w}}}
\def\bx{{\mathbf{x}}}
\def\by{{\mathbf{y}}}
\def\bz{{\mathbf{z}}}
\def\b0{{\mathbf{0}}}

\def\bA{{\mathbf{A}}}
\def\bB{{\mathbf{B}}}
\def\bC{{\mathbf{C}}}
\def\bD{{\mathbf{D}}}
\def\bE{{\mathbf{E}}}
\def\bF{{\mathbf{F}}}
\def\bG{{\mathbf{G}}}
\def\bH{{\mathbf{H}}}
\def\bI{{\mathbf{I}}}
\def\bJ{{\mathbf{J}}}
\def\bK{{\mathbf{K}}}
\def\bL{{\mathbf{L}}}
\def\bM{{\mathbf{M}}}
\def\bN{{\mathbf{N}}}
\def\bO{{\mathbf{O}}}
\def\bP{{\mathbf{P}}}
\def\bQ{{\mathbf{Q}}}
\def\bR{{\mathbf{R}}}
\def\bS{{\mathbf{S}}}
\def\bT{{\mathbf{T}}}
\def\bU{{\mathbf{U}}}
\def\bV{{\mathbf{V}}}
\def\bW{{\mathbf{W}}}
\def\bX{{\mathbf{X}}}
\def\bY{{\mathbf{Y}}}
\def\bZ{{\mathbf{Z}}}

\def\bxi{{\boldsymbol{\xi}}}

\def\sT{{\mathsf{T}}}
\def\sH{{\mathsf{H}}}
\def\cmp{{\text{cmp}}}
\def\cmm{{\text{cmm}}}
\def\WPT{{\text{WPT}}}
\def\lo{{\text{lo}}}
\def\gl{{\text{gl}}}

\def\tT{{\widetilde{T}}}
\def\tF{{\widetilde{F}}}
\def\tP{{\widetilde{P}}}
\def\tG{{\widetilde{G}}}
\def\tbh{{\widetilde{\mathbf{h}}}}
\def\tbg{{\widetilde{\mathbf{g}}}}

\def\mA{{\mathbb{A}}}
\def\mB{{\mathbb{B}}}
\def\mC{{\mathbb{C}}}
\def\mD{{\mathbb{D}}}
\def\mE{{\mathbb{E}}}
\def\mF{{\mathbb{F}}}
\def\mG{{\mathbb{G}}}
\def\mH{{\mathbb{H}}}
\def\mI{{\mathbb{I}}}
\def\mJ{{\mathbb{J}}}
\def\mK{{\mathbb{K}}}
\def\mL{{\mathbb{L}}}
\def\mM{{\mathbb{M}}}
\def\mN{{\mathbb{N}}}
\def\mO{{\mathbb{O}}}
\def\mP{{\mathbb{P}}}
\def\mQ{{\mathbb{Q}}}
\def\mR{{\mathbb{R}}}
\def\mS{{\mathbb{S}}}
\def\mT{{\mathbb{T}}}
\def\mU{{\mathbb{U}}}
\def\mV{{\mathbb{V}}}
\def\mW{{\mathbb{W}}}
\def\mX{{\mathbb{X}}}
\def\mY{{\mathbb{Y}}}
\def\mZ{{\mathbb{Z}}}

\def\cA{\mathcal{A}}
\def\cB{\mathcal{B}}
\def\cC{\mathcal{C}}
\def\cD{\mathcal{D}}
\def\cE{\mathcal{E}}
\def\cF{\mathcal{F}}
\def\cG{\mathcal{G}}
\def\cH{\mathcal{H}}
\def\cI{\mathcal{I}}
\def\cJ{\mathcal{J}}
\def\cK{\mathcal{K}}
\def\cL{\mathcal{L}}
\def\cM{\mathcal{M}}
\def\cN{\mathcal{N}}
\def\cO{\mathcal{O}}
\def\cP{\mathcal{P}}
\def\cQ{\mathcal{Q}}
\def\cR{\mathcal{R}}
\def\cS{\mathcal{S}}
\def\cT{\mathcal{T}}
\def\cU{\mathcal{U}}
\def\cV{\mathcal{V}}
\def\cW{\mathcal{W}}
\def\cX{\mathcal{X}}
\def\cY{\mathcal{Y}}
\def\cZ{\mathcal{Z}}
\def\cd{\mathcal{d}}
\def\Mt{M_{t}}
\def\Mr{M_{r}}
\def\O{\Omega_{M_{t}}}
\newcommand{\figref}[1]{{Fig.}~\ref{#1}}
\newcommand{\tabref}[1]{{Table}~\ref{#1}}

\newcommand{\fb}{\tx{fb}}
\newcommand{\nf}{\tx{nf}}
\newcommand{\BC}{\tx{(bc)}}
\newcommand{\MAC}{\tx{(mac)}}
\newcommand{\Pout}{p_{\mathsf{out}}}
\newcommand{\nnn}{\nn\\}
\newcommand{\FB}{\tx{FB}}
\newcommand{\TX}{\tx{TX}}
\newcommand{\RX}{\tx{RX}}
\renewcommand{\mod}{\tx{mod}}
\newcommand{\m}[1]{\mathbf{#1}}
\newcommand{\td}[1]{\tilde{#1}}
\newcommand{\sbf}[1]{\scriptsize{\textbf{#1}}}
\newcommand{\stxt}[1]{\scriptsize{\textrm{#1}}}
\newcommand{\suml}[2]{\sum\limits_{#1}^{#2}}
\newcommand{\sumlk}{\sum\limits_{k=0}^{K-1}}
\newcommand{\eqhsp}{\hspace{10 pt}}
\newcommand{\tx}[1]{\texttt{#1}}
\newcommand{\Hz}{\ \tx{Hz}}
\newcommand{\sinc}{\tx{sinc}}
\newcommand{\diag}{\mathrm{diag}}
\newcommand{\MAI}{\tx{MAI}}
\newcommand{\ISI}{\tx{ISI}}
\newcommand{\IBI}{\tx{IBI}}
\newcommand{\CN}{\tx{CN}}
\newcommand{\CP}{\tx{CP}}
\newcommand{\ZP}{\tx{ZP}}
\newcommand{\ZF}{\tx{ZF}}
\newcommand{\SP}{\tx{SP}}
\newcommand{\MMSE}{\tx{MMSE}}
\newcommand{\MINF}{\tx{MINF}}
\newcommand{\RC}{\tx{MP}}
\newcommand{\MBER}{\tx{MBER}}
\newcommand{\MSNR}{\tx{MSNR}}
\newcommand{\MCAP}{\tx{MCAP}}
\newcommand{\vol}{\tx{vol}}
\newcommand{\ah}{\hat{g}}
\newcommand{\tg}{\tilde{g}}
\newcommand{\teta}{\tilde{\eta}}
\newcommand{\heta}{\hat{\eta}}
\newcommand{\uh}{\m{\hat{s}}}
\newcommand{\eh}{\m{\hat{\eta}}}
\newcommand{\hv}{\m{h}}
\newcommand{\hh}{\m{\hat{h}}}
\newcommand{\Po}{P_{\mathrm{out}}}
\newcommand{\Poh}{\hat{P}_{\mathrm{out}}}
\newcommand{\Ph}{\hat{\gamma}}
\newcommand{\mat}[1]{\begin{matrix}#1\end{matrix}}
\newcommand{\ud}{^{\dagger}}
\newcommand{\C}{\mathcal{C}}
\newcommand{\nn}{\nonumber}
\newcommand{\nInf}{U\rightarrow \infty}

\title{Multi-Chirp AFDM for Rydberg Atomic Quantum Receivers: Waveform and Algorithm Design\!
\thanks{The fundamental research described in this paper was supported by the National Research Foundation (NRF) of Korea under Grant RS-2024-00409492, and by the German Research Foundation (DFG) through the QUBYSM Project with Grant No. G:(GEPRIS)576171458. \\ \textit{(Corresponding author: Sunwoo Kim)}
\newline \indent Hanvit Kim, Kihong Min, and Sunwoo Kim are with the Department of Electronics and Computer Engineering, Hanyang University, Seoul, 04763, South Korea (email: dante0813@hanyang.ac.kr; khmin705@hanyang.ac.kr; remero@hanyang.ac.kr).

Hyeon Seok Rou and Giuseppe Thadeu Freitas de Abreu are with the School of Computer Science Engineering, Constructor University, 28579 Bremen, Germany (email: hrou@constructor.university; gabreu@constructor.university).}
}
\author{
\IEEEauthorblockN{Hanvit Kim,~\IEEEmembership{Graduate Student Member,~IEEE}, Hyeon Seok Rou,~\IEEEmembership{Member,~IEEE}, \\ Kihong Min,~\IEEEmembership{Graduate Student Member,~IEEE}, Giuseppe Thadeu Freitas de Abreu,~\IEEEmembership{Senior Member,~IEEE}, \\ and Sunwoo Kim,~\IEEEmembership{Senior Member,~IEEE}}
%
%
\vspace{-4ex}
}
\maketitle

\begin{abstract}
We propose a \ac{MC-AFDM} scheme for joint delay-Doppler estimation with \ac{RAQRs}. 
The work is motivated by the fact that \ac{RAQRs}, while offering superior sensitivity and advantageous sensing capabilities, suffer from an optical ambiguity due to Doppler shifts in \ac{DD} channel caused by target mobility, which precludes the reliable estimation of delay-Doppler parameters.
To resolve this optical ambiguity and unleash the potential of RAQRs in \ac{DD} channel, the proposed \ac{MC-AFDM} employs multiple distinct AFDM post-chirp signals to overcome the rank-deficiency problem of the classical \ac{SC-AFDM}, thereby enabling accurate delay-Doppler estimation of multiple targets. 
Our analysis reveals that the edge distribution of the multiple post-chirp parameters can further improve estimation accuracy by minimizing the condition number.
Building on the proposed \ac{MC-AFDM} waveform, we design a sequential signal processing algorithm based on \ac{OMP} and \ac{LS}, and we derive the theoretical lower bounds for delay and Doppler estimation.
Numerical results show that the proposed \ac{MC-AFDM} improves range and velocity estimation accuracy by up to two orders of magnitude compared to the classical \ac{SC-AFDM}, and approaches its theoretical bounds through post-chirp optimization—validating the quantum-induced advantage of \ac{RAQRs} for high-resolution quantum wireless sensing.
\end{abstract}

\begin{IEEEkeywords}
Rydberg atomic quantum receiver (RAQRs), affine frequency division multiplexing (AFDM), delay-Doppler estimation, quantum sensing
\end{IEEEkeywords}

\acresetall	

\section{Introduction}\label{Sec-I}

\IEEEPARstart{R}{ecently}, \ac{RAQRs} have attracted extensive attention as a key enabler for next-generation wireless communications~\cite{jing2020atomic, du2022realization, cui2026rydberg, cui_MIMO, gong2025rydberg,kang2026deepq, gong2026rydberg, zhu2025general, cui2025rydberg_frontier, chen2025new, peng2025ris, peng2026enhanced}.
By replacing the bulky \ac{RF} components with the optical equipment, \ac{RAQRs} can achieve extremely high-level electric-field sensitivity, approaching the level of the \ac{SQL} ($\sim 700 \mathrm{pV} \cdot \mathrm{~cm}^{-1} \cdot \mathrm{~Hz}^{-1 / 2}$)~\cite{jing2020atomic}.
Furthermore, the numerous energy levels of the Rydberg atom enable the simultaneous detection of a broad range of multi-band signals, spanning the MHz-to-THz range~\cite{du2022realization, cui2026rydberg}.
To fully unleash these exceptional benefits, the early research on \ac{RAQRs} has started with analyzing the interaction between Rydberg atoms and \ac{RF} signals by quantum optics, and modeling the received signal system model~\cite{cui_MIMO,gong2026rydberg, zhu2025general}.
Thereafter, integrating \ac{RAQRs} with several advanced communications technologies has been studied, such as atomic \ac{MIMO}~\cite{cui_MIMO,gong2025rydberg,kang2026deepq}, quantum \ac{ISAC}~\cite{chen2025new}, atomic \ac{RIS}~\cite{peng2025ris}, and \ac{LEO} satellite communications~\cite{peng2026enhanced}.

Along with wireless communications based on \ac{RAQRs}, wireless sensing with \ac{RAQRs}, also referred to as quantum wireless sensing, is another branch that has significantly advanced through recent efforts~\cite{simons2019rydberg, Quantum_Wireless, holloway2020multiple, kim2025quantum, gong2025rydberg_TVT,cui2025realizing,11168279, kim2025quantum_2, Chen_2026_JSAC, Guo_AoA}. 
The extreme sensitivity and low noise level of \ac{RAQRs} significantly increase the sensing accuracy compared to their classical counterparts.
The early studies on quantum wireless sensing have started with demonstration with a testbed prototype, including phase detection~\cite{simons2019rydberg}, micro-vibration detection~\cite{Quantum_Wireless}, and multi-band \ac{RF} signal detection~\cite{holloway2020multiple}.
Thereafter, various signal processing algorithms for quantum wireless sensing have been proposed, such as \ac{AOA} estimation~\cite{kim2025quantum,gong2025rydberg_TVT}, range estimation~\cite{cui2025realizing}, and localization~\cite{11168279}.
Furthermore, several \ac{RAQRs}-enabled sensing paradigms and receiver architectures have been proposed, including multi-band quantum wireless sensing~\cite{kim2025quantum_2} and \ac{AOA} estimation with a single atomic receiver~\cite{Chen_2026_JSAC, Guo_AoA}.

Despite these recent efforts, it is obvious that most of the existing works on \ac{RAQRs} have focused on the receiver side technology, such as receiver architecture design~\cite{wang2025multi,chen2026wideband,chen2025harnessing}.
Meanwhile, the transmitter side technology for \ac{RAQRs} is still at a nascent stage, which is essential for the realization of optimal communications and sensing performance.
This lack of studies has driven researchers to investigate the new transmission technology that is compatible with \ac{RAQRs}.
For example, authors in~\cite{cui2024mimo} proposed a precoding technique for \ac{RAQRs} to achieve the enhanced achievable rates and capacity in atomic \ac{MIMO} systems.
Also, authors in~\cite{cui2025realizing} proposed a self-heterodyne sensing scheme, which removes the necessity of extra \ac{LO} by utilizing the signal transmitter itself as a reference signal transmission.
By utilizing this self-heterodyne sensing scheme, the estimation accuracy and the range of the target have tremendously increased compared to the classical \ac{RAQRs} without exploiting the external \ac{LO}.
%

Motivated by these recent advances in transmitter side technology for \ac{RAQRs}, this paper focuses on a waveform design for \ac{RAQRs}, which, to the best of our knowledge, has not been addressed yet.
In particular, we focus on the design of a \ac{MC-AFDM} waveform for accurate joint delay-Doppler estimation with \ac{RAQRs}.
While \ac{AFDM} has been known as a promising candidate for 6G and beyond-6G communications due to its robustness against delay-Doppler channels and high flexibility~\cite{Rou_Chirp26, Rou_Mag_2, Rou_SPM24}, its utilization for \ac{RAQRs} is challenging due to several factors.
These technical challenges include: \textit{1)} the absence of a quantum system model, framework, and compatible signal processing algorithm for \ac{AFDM}-Rydberg atom cross interaction, and \textit{2)} delay-Doppler measurement ambiguity introduced by the unique optical readout of Rydberg probes in a \ac{DD} channel.
Finding the solution to these challenges is crucial for the utilization of \ac{AFDM} to the \ac{RAQRs}, which will be a key enabler for future quantum communications and quantum wireless sensing.

The new \ac{MC-AFDM} waveform, designed for accurate joint delay-Doppler estimation with \ac{RAQRs}, utilizes multiple and distinct post-chirps for \ac{AFDM} signal transmission, and therefore differs from the conventional \ac{SC-AFDM} that utilizes a single and unique post-chirp (generally denoted as $c_{1}$)~\cite{AFDM_Tek}.
By transmitting \ac{AFDM} with multiple post-chirps, the optical measurement of \ac{RAQRs} for the delay-Doppler estimation becomes distinct across each time frame with different post-chirps, enabling the unique delay-Doppler estimation.
In our previous study~\cite{kim2026dual}, we have found that two-distinct post-chirps enable the delay-Doppler estimation with \ac{RAQRs} under a single target scenario.

In so doing, the paper extends the previous work~\cite{kim2026dual} to a multi-target scenario and shows the efficiency of \ac{MC-AFDM} under the \ac{DD} channel.
Specifically, we propose the chirp parameter design criteria of the \ac{MC-AFDM} to be compatible with \ac{RAQRs} and enhance the delay-Doppler estimation accuracy.
Also, we propose an \ac{OMP}-based signal processing algorithm and derive the theoretical lower bound for delay-Doppler estimation of the proposed \ac{MC-AFDM} with \ac{RAQRs}. 

Our key contributions are summarized in the following:
\begin{itemize}
   \item \textbf{Rydberg-\ac{AFDM} quantum system model analysis}: We analyze the quantum system model that describes the interaction between the Rydberg atom and \ac{AFDM} signal.
   Also, the self-heterodyne sensing-based delay-Doppler estimation framework for the \ac{AFDM} is designed.
   In this framework, we reveal that the single and unique post-chirp of classical \ac{SC-AFDM} signal introduces a unique optical measurement for delay-Doppler estimation with \ac{RAQRs}, which blocks the joint delay-Doppler estimation of the multi-target;
   \item \textbf{\ac{MC-AFDM} waveform design}: The \ac{MC-AFDM} is proposed to eliminate the inherent ambiguity induced by the unique optical measurement of \ac{SC-AFDM}.
   The proposed \ac{MC-AFDM} resolves the rank-deficiency problem of the \ac{SC-AFDM} by utilizing distinct post-chirps within multiple different time-frames, enabling the unique delay-Doppler estimation.
   Furthermore, we propose the chirp-parameter design criteria for the proposed \ac{MC-AFDM} to further enhance the joint delay-Doppler estimation performance; and
   \item \textbf{Delay-Doppler estimation algorithm design and lower bound derivation}: We propose a signal processing algorithm for the joint delay-Doppler estimation with the proposed \ac{MC-AFDM}.
   The proposed algorithm comprises two sequential steps, including \ac{OMP}-based fluctuation frequency estimation and \ac{LS}-based joint delay-Doppler estimation.
   Furthermore, we derive the theoretical lower bound of delay-Doppler estimation and demonstrate that the estimation accuracy of the proposed algorithm approaches its theoretical lower bound.
\end{itemize}

The remainder of the paper is organized as follows. 
The physics and framework for \ac{RAQRs} for \ac{AFDM} detection and processing are analyzed in Section~\ref{Sec-II}. 
The measured signal system model based on the established framework with the problem analysis is proposed in Section~\ref{Sec-III}.
In Section~\ref{Sec-IV}, the \ac{MC-AFDM} and its design criteria are proposed.
Section~\ref{Sec-V} provides the algorithm and the lower bound for joint delay-Doppler estimation with the proposed \ac{MC-AFDM}.
Eventually, numerical experiments and discussions are provided in Section~\ref{Sec-VI}, and we conclude this paper in Section~\ref{Sec-VII}.

$\textit{Notation:}$ All scalars are represented by upper or lowercase letters, while column vectors and matrices are denoted by bold lowercase and uppercase letters, respectively.
$(\cdot)^{\mathsf{T}}$,$(\cdot)^{-1}$, and $(\cdot)^{\mathsf{H}}$ denote transpose, inverse, and conjugate transpose operators, respectively.
The operators $\mathrm{diag}\{\cdot\}$ and $\mathrm{blkdiag}\{\cdot\}$ denote the diagonal matrix and the block-diagonal matrix, respectively.
$\odot$ denotes the Hadamard product.
The pseudoinverse of $\mathbf{A}$ is denoted by $\mathbf{A}^{\dagger}$, which equals $\left(\mathbf{A}^{\mathrm{H}} \mathbf{A}\right)^{-1} \mathbf{A}^{\mathrm{H}}$.
$\mathbf{O}_{M, N}$ and $\boldsymbol{0}_{M, 1}$ denote the $M \times N$ zero matrix and $M \times 1$ zero vector.
For two matrices $\mathbf{A}$ and $\mathbf{B}$, $[\mathbf{A}, \mathbf{B}]$ denotes the commutator $\mathbf{A}\mathbf{B}-\mathbf{B}\mathbf{A}$, and $\{\mathbf{A}, \mathbf{B}\}$ denotes their anti-commutator $\mathbf{A B}+\mathbf{B A}$.
$a_{0}=52.9~\mathrm{pm}$ denotes the Bohr radius.
$j=\sqrt{-1}$ is an imaginary unit.
$\mathbb{E}\{\cdot\}$ denotes the expectation.
The notation $\mathcal{N}(\boldsymbol{\mu},\sigma^2\mathbf{I})$ denotes a Gaussian distribution whose mean is $\boldsymbol{\mu}$ and covariance is $\sigma^2\mathbf{I}$.
The operators $\Re\{\cdot\}$ and $\Im\{\cdot\}$ respectively denote the real and imaginary parts of a complex number. 
The floor of a real number $x$ is $\lfloor x \rfloor$.

\section{Preliminaries of Rydberg Atomic Quantum Receivers for AFDM Detection and Processing}\label{Sec-II}
In this section, we provide the fundamental principles of the Rydberg atom that describe the interaction with \ac{RF} signals.
Especially, we explain the quantum behavior and system model for the \ac{AFDM} signal reception and processing.

\subsection{Fundamentals of Rydberg Atom}\label{Sec-II-A}

\subsubsection{Quantum state} 

The energy level of an atom changes as the photon is either absorbed or emitted. 
By leveraging this electron transition, the \ac{RF} signals can be detected. 
The electron transition can be modeled by different quantum states of the Rydberg atom. 
These quantum states include the ground state $|1\rangle$, a lowly-excited state $|2\rangle$, and the Rydberg states, which are $|3\rangle$ and $|4\rangle$.
The probe beam of angular frequency $\omega_{p}$ excites the quantum state from $|1\rangle \rightarrow |2\rangle$, and the coupling beam of angular frequency $\omega_{c}$ induces the transition $|2\rangle \rightarrow |3\rangle$, transforming the alkali-metal atom to the Rydberg atom.
Thereafter, the RF signals of angular frequency $\omega_{\mathrm{RF}}$ excite the Rydberg state to another Rydberg state $|3\rangle \rightarrow |4\rangle$, enabling the \ac{RF} signal detection by monitoring the variations induced by these electron transitions via \ac{PD} \cite{cui_MIMO}.

\subsubsection{Rabi frequency and detuning} 

The interaction strength between the \ac{RF} signals and the electric dipole moment is characterized by the \textit{Rabi frequency} \cite{quantum_rotating}. 
The general expression of Rabi frequency $\Omega_{\mathrm{RF}}$ is given by \cite{cui2025rydberg_Mag}:
\begin{equation}
\Omega_{\mathrm{RF}}=\frac{\mu_{34}}{\hbar}|E_{\mathrm{RF}}|,
\label{General_Rabi}
\end{equation}
where $\mu_{34}$, $\hbar$, and $E_{\mathrm{RF}}$ are the transition dipole moment, reduced Planck constant, and RF signals, respectively. 

Furthermore, the frequency deviation between the transition frequency and the carrier frequency arises for every electron transition process due to the discrete energy levels, also referred to as \textit{frequency detuning}. 
For example, the frequency detuning $\Delta_{p}$ presents the gap between $\omega_{p}$ and the transition frequency $\omega_{12}$, i.e., $\Delta_{p}=\omega_{p}-\omega_{12}$.
Likewise, the detuning of the coupling beam $\Delta_{c}$ and the RF signal $\Delta_{\mathrm{RF}}$ can be introduced in the same manner so that $\Delta_{c}=\omega_{c}-\omega_{23}$ and $\Delta_{\mathrm{RF}}=\omega_{\mathrm{RF}}-\omega_{34}$, where $\omega_{23}$ and $\omega_{34}$ are transition frequencies of the coupling beam and RF signals, respectively.
The detunings are zero when they are on-resonant with their respective electron transitions. In this work, we assume the probe beam and the coupling beam are on-resonant with their transitions (i.e., $\Delta_{p}=\Delta_{c}=0$) \cite{cui2025realizing}.

\subsubsection{Dynamics of quantum state}
The dynamic quantum state is governed by the Lindblad master equation.
For the \ac{AFDM}, we adopt the four-level system model since the deviation of each chirp-subcarrier frequency from the transition frequency can be fully described by the different frequency detunings.
Based on this system, the Lindblad master equation is represented as \cite{jing2020atomic}:
\begin{equation}
\frac{\partial \boldsymbol{\rho}}{\partial t}=-\frac{j}{\hbar}[\mathbf{H}, \boldsymbol{\rho}]+\mathcal{L},
\label{Lindblad}
\end{equation}
where $\mathbf{H}$, $\boldsymbol{\rho}$, and $\mathcal{L}$ are the Hamiltonian operator, density matrix, and the decoherence operator, respectively. 

Here, the Hamiltonian operator $\mathbf{H}$ is given by
\begin{equation}
\mathbf{H}=\frac{\hbar}{2}\left[\begin{array}{cccc}
0 & \Omega_{p} & 0 & 0 \\
\Omega_{p} & 0 & \Omega_{c} & 0 \\
0 & \Omega_{c} & 0 & \Omega_{\mathrm{RF}} \\
0 & 0 & \Omega_{\mathrm{RF}} & -2 \Delta_{\mathrm{RF}}
\end{array}\right],
\label{Hamiltonian}
\end{equation}
where $\Omega_{p}$, $\Omega_{c}$, $\Omega_{\mathrm{RF}}$, and $\Delta_{\mathrm{RF}}$ are Rabi frequencies of probe beam, coupling beam, \ac{RF} signal, and the frequency detunings of \ac{RF} signal, respectively. 
The decoherence matrix $\mathcal{L}$ is presented as \cite{jing2020atomic}:
\begin{equation}
\mathcal{L}=-\frac{1}{2}\{\boldsymbol{\Gamma}, \boldsymbol{\rho}\}+\boldsymbol{\Lambda},
\label{Decay_Matrix}
\end{equation}
where $\boldsymbol{\Gamma}=\mathrm{diag}\left\{0, \gamma_2, \gamma_3, \gamma_4\right\}$. 

Here, $\gamma_{j}$ are the decay rates of the $j$-th level.
The decay matrix is presented as $\boldsymbol{\Lambda}=\mathrm{diag}\left\{\gamma_2 \rho_{22}+\gamma_4 \rho_{44}, \gamma_3 \rho_{33}, 0,0\right\}$.

\subsubsection{Optical measurement model}
The optical measurement, which corresponds to the probe beam output power, can be acquired by solving the steady-state solution of $\rho_{12}$, which is the $(1,2)$-th element of the density matrix $\boldsymbol{\rho}$
Here, the $\rho_{12}$ is given by \cite{RydModel_RuiNi2023}:
\begin{equation}
\rho_{12}=\frac{A_{1} \Omega_{\mathrm{RF}}^2 \Delta_{\mathrm{RF}}^2+j B_1 \Omega_{\mathrm{RF}}^4}{C_1 \Omega_{\mathrm{RF}}^4+C_2 \Omega_{\mathrm{RF}}^2+C_3 \Delta_{\mathrm{RF}}^2},
\label{rho_12}
\end{equation}
where $A_1=2 \Omega_p \Omega_c^2$, $B_1=\gamma_2 \Omega_p$, $C_1=2 \Omega_p^2+\gamma_2^2$, $C_{2}=2 \Omega_p^2\left(\Omega_c^2+\Omega_p^2\right)$, and $C_3=4\left(\Omega_c^2+\Omega_p^2\right)^2$. 

Let $P_{\mathrm{in}}$ represents the input power of the probe beam. Then, according to the adiabatic approximation, the output power of the probe beam $P_{\mathrm{out}}$ is defined by the imaginary part of the $(1,2)$-th entry of $\boldsymbol{\rho}$, $\rho_{12}$, which is presented as \cite{jing2020atomic}:
\begin{equation}
P_{\mathrm{out}}=  P_{\text {in}} \exp \left(-C_{0} \operatorname{Im}\left\{\rho_{12}\right\}\right),
\label{Output_Power}
\end{equation}
where the constant $C_{0}$ is given by $C_{0} \triangleq \frac{2 N_0 \mu_{12}^2 k_p L}{\epsilon_0 \hbar \Omega_{p}}$.
Here, $N_{0}$, $\mu_{12}$, ${\epsilon}_{0}$, $k_{p}$, and $L$ are total density of atoms, transition dipole moment of transition $|1\rangle \rightarrow |2\rangle$, vacuum permittivity, length of vapor cell, and wavenumber of probe beam, respectively.
After the readout of the probe beam power, the \ac{PD} converts it into the current $I_{\mathrm{out}}=\frac{q \eta}{\hbar \omega_{\mathrm{p}}} P_{\mathrm{out}}$, where $\eta$ and $q$ are the quantum efficiency of the \ac{PD} and the charge of electrons, respectively.
Then, the output voltage is given by \cite{cui2025realizing}
\begin{equation}
V_{\mathrm {out }}=R_{\mathrm{T}} I_{\mathrm {out }} \triangleq V_{\mathrm {in }} \exp \left(-C_0 \operatorname{Im}\left\{\rho_{12}\right\}\right),
\label{Output_Voltage}
\end{equation}
where $R_{\mathrm{T}}$ is the load impedance and $V_{\mathrm{in}} \triangleq \frac{R_{\mathrm{T}} q \eta}{\hbar \omega_{\mathrm{p}}} P_{\mathrm{in}}$ is the input voltage.

To ease the representation, we introduce the bias function $\Pi(\Omega, \Delta)$ of probe beam output power, obtained by substituting \eqref{rho_12} into \eqref{Output_Power}, which is given by \cite{cui2025realizing}:
\begin{equation}
\Pi(\Omega, \Delta) \triangleq V_{\mathrm{in }} \exp \left\{-\frac{B_1 C_0 \Omega^4}{C_1 \Omega^4+C_2 \Omega^2+C_3 \Delta^2}\right\},
\label{bias}
\end{equation}
where $\Omega \in[0,+\infty)$ and $\Delta \in \mathbb{R}$ are the general notation of the Rabi frequency and detuning, respectively. 

Furthermore, the gain of the probe beam power $\Upsilon(\Omega, \Delta)$ is derived as a partial derivative of $\Pi(\Omega, \Delta)$, given by \cite{cui2025realizing}:
\begin{align}\label{eq:Upsilon}
&\Upsilon(\Omega, \Delta) \triangleq \frac{\partial \Pi(\Omega, \Delta)}{\partial 
\Omega} =  -2V_{\rm in} B_1C_0\times \\&\exp\left\{
-\frac{B_1C_0\Omega^4}{C_1\Omega^4 + C_2\Omega^2 + C_3\Delta^2}
\right\} \frac{\Omega^3(C_2\Omega^2 + 2C_3\Delta^2)}{(C_1\Omega^4 + C_2\Omega^2 + C_3\Delta^2)^2}.\notag
\end{align} 

\subsection{Utilization of Self-Heterodyne Sensing}\label{Sec-II-B}
To extract the channel information from the targets, the \ac{LO} is employed for transmitting the reference signal, which is referred to as heterodyne sensing \cite{jing2020atomic}. 
However, introducing additional reference sources induces a bulky receiver architecture with insufficient instantaneous bandwidth\footnote{Note that the instantaneous bandwidth of \ac{RAQRs} is typically less than 10 MHz \cite{yang2024highly}, which might be insufficient for accurate target sensing in a \ac{DD} channel.}, which may not be suitable for \ac{AFDM} signal detection.
Therefore, we adopt the \textit{self-heterodyne sensing} technique, in which the transmitter itself acts as \ac{LO} \cite{cui2025realizing}.
In this system, the Rabi frequency $\Omega_{\mathrm{RF}}$ in \eqref{General_Rabi} is redefined as the superposition of the Rabi frequency of the reference signal and the target signal, which is 
\begin{equation}
\Omega_{\mathrm{RF}}(t)=\left|\Omega_{l}(t)+\Omega_{s}(t) e^{j \Delta\phi(t,\tau,\tau_{l})}\right|,
\label{Rabi_RF}
\end{equation}
where $\Omega_{l}(t)=\frac{\mu_{34}}{\hbar}|E_{l}(t)|$ and $\Omega_{s}(t)=\frac{\mu_{34}}{\hbar}|E_{s}(t)|$ are the Rabi frequencies of the reference signal $E_{l}(t)$ and target signal $E_{s}(t)$, respectively.
The phase difference between the reference signal and the target signal is presented as
\begin{equation}
\Delta\phi(\tau,\tau_{l})=\phi(t-\tau)-\phi(t-\tau_{l}),
\label{phase_diff}
\end{equation}
where $\tau$ and $\tau_{l}$ are the propagation delay from target and \ac{LO} to the \ac{RAQRs}, respectively.

The phase function $\phi(\cdot)$ analysis of \ac{AFDM} for the self-heterodyne sensing in the \ac{DD} channel will be elaborated in Section~\ref{Sec-III-B}. 
Considering the strong reference approximation, which is $|E_l(t)| \gg |E_s(t)|$, the frequency detuning $\Delta_{\mathrm{RF}}$ is
\begin{equation}
\Delta_{\mathrm{RF}}=\Delta_l=\omega_{{l}}-\omega_{34},
\label{detuning_freq}
\end{equation}
where $\omega_{l}$ denotes the angular frequency of reference signal.

Eventually, by substituting \eqref{Rabi_RF} and \eqref{detuning_freq} into the $V_{\mathrm{out}}(t)=\Pi(\Omega_{\mathrm{RF}}(t), \Delta_{\mathrm{RF}})$, the measured voltage is given by \cite{cui2025realizing}
\begin{align}\label{measured_voltage}
y(t) =
&\Pi(\Omega_{\mathrm{RF}}(t), \Delta_{\mathrm{RF}}) + n(t)\, {\approx}\, \Pi\left(\Omega_{l}(t), \Delta_{l}\right)
\\
&+\frac{\mu_{34}}{\hbar}\Upsilon\left(\Omega_{l}(t), \Delta_{l}\right)  |E_{s}(t)| \cos (\Delta \phi(\tau,\tau_{l}))+n(t) \notag, 
\end{align}
where linearization comes from the first-order Taylor expansion of $\Pi(\Omega_{\mathrm{RF}}(t), \Delta_{\mathrm{RF}})$ \cite{cui2025realizing}.

The noise $n(t)$ follows the Gaussian distribution $\mathcal{N}\left(0, \sigma^2(t)\right)$, where its power $\sigma^2(t)=\sigma_{\mathrm{int}}^2(t)+\sigma_{\mathrm{ext}}^2(t)$ comprises internal sources $\sigma_{\mathrm{int}}^2(t)$ and external sources $\sigma_{\mathrm{ext}}^2(t)$.
The internal noise arises from the randomness in the optical detection process, and the external noise comes from the black-body radiation and quantum fluctuations \cite{tu2024approaching}.
Here, both the power of internal noise and external noise are presented as \cite{cui2025realizing}
\begin{align}\label{noise_intrinsic}
&\sigma^2_{\mathrm{int}}(t)=q R_{\mathrm{T}} \Pi\left(\Omega_l(t), \Delta_l\right), \notag \\ 
&\sigma^2_{\mathrm{ext}}(t)=\frac{\mu_{34}^2}{\hbar^2} \Upsilon^2\left(\Omega_l(t), \Delta_l\right)\left\langle E_I^2\right\rangle, 
\end{align}
where $\left\langle E_I^2\right\rangle=\frac{\hbar \omega_{\mathrm{RF}}^3}{\pi \epsilon_0 c^3}\left(2 n_{\mathrm{th}}+1\right)$ is the field intensity of blackbody radiation.

Here, $\epsilon_0$ is the permittivity of free space, $c$ is the speed of light, and $n_{\mathrm{th}}=1 /\left(e^{\hbar \omega_{\mathrm{RF}} / k_B T_E}-1\right)$ is the Bose-Einstein distribution, where $k_{B}$ and $T_{E}$ are Boltzmann constant and the ambient temperature, respectively.

\begin{figure}[t]
\centering
\includegraphics[width=\linewidth]{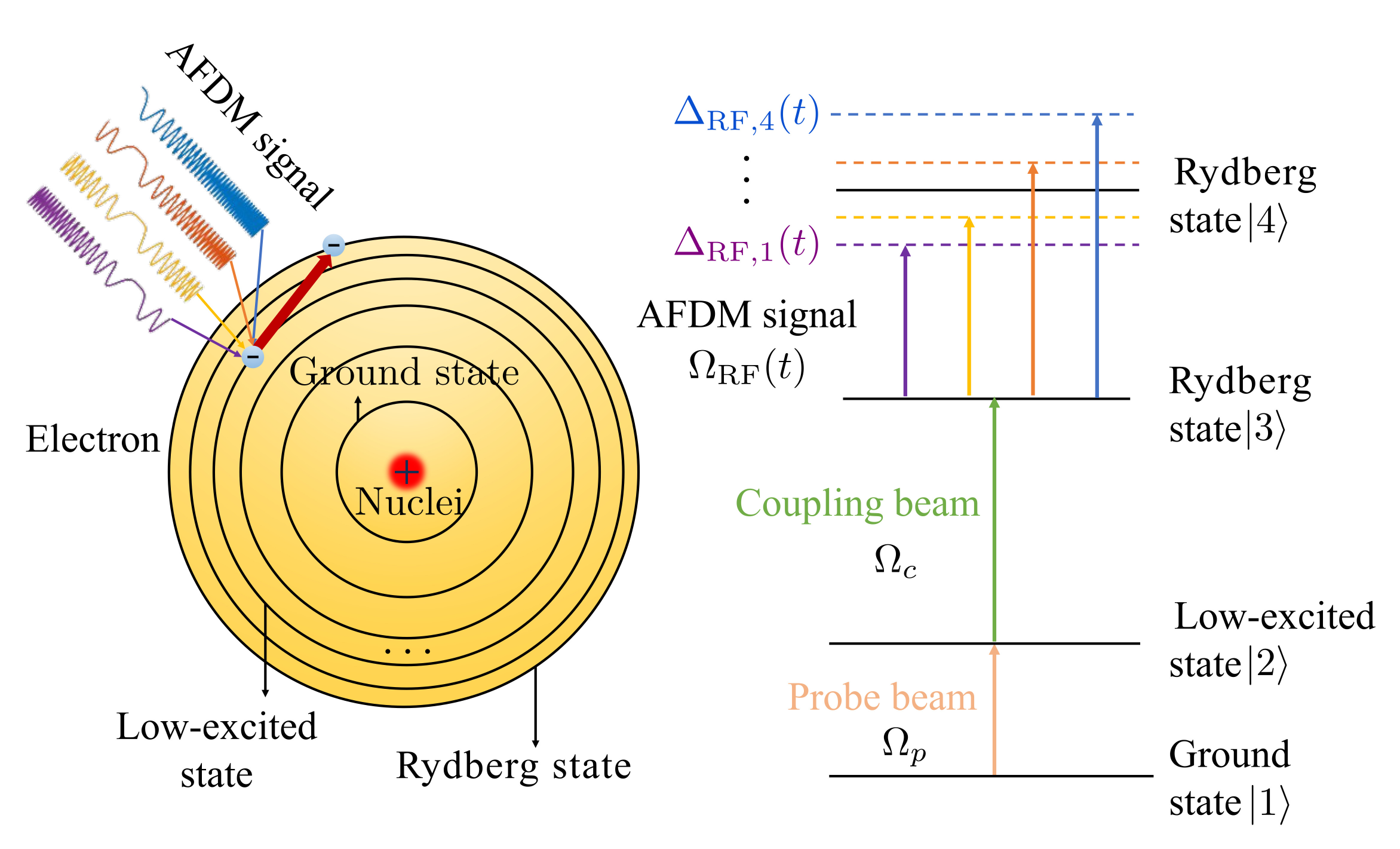}
\caption{Schematic diagram of the electron transitions and four-level quantum system for AFDM detection. Here, the number of chirp-subcarriers $N$ is 4.}
\label{Atom_AFDM}
\vspace{-2ex}
\end{figure}

\subsection{Influence of AFDM Property}\label{Sec-II-C}

Based on the general expression of the measured voltage output in \eqref{measured_voltage}, the measured signal can be modified according to the property of impinging \ac{RF} signals.
Here, we introduce the two important properties of \ac{AFDM} and design a new measured signal model based on these characteristics.
\subsubsection{Multicarrier property} \ac{AFDM} is a multicarrier waveform, where the data is modulated via the \ac{AFT} \cite{AFDM_Bemani_TWC, Rou_SPM24}.
Therefore, unlike the single carrier-based waveform, which only considers a single detuning $\Delta_{\mathrm{RF}}$, the multicarrier of AFDM induces multiple frequency detunings, $\Delta_{\mathrm{RF},m}\triangleq\Delta_{l,m}=\omega_{l,m}-\omega_{34}$ for $m=0,1,\cdots,N-1$, where $\omega_{l,m}$ is the angular frequency of the $m$-th subcarrier, $m$ is the index of subcarriers, and $N$ is the number of subcarriers for \ac{AFDM}, respectively.
Likewise, the phase function, received target signal, and the noise of \ac{AFDM} are modified as $\Delta\phi_{m}(\tau,\tau_{l})$, $E_{s,m}(t)$, and $n_{m}(t)$.
\subsubsection{Chirp property} Due to the chirp property of \ac{AFDM}, the instantaneous frequency of each chirp-subcarrier, $\omega_{l,m}$ for $m=0,1,\cdots,N-1$, is time-varying.
This property yields the time-varying frequency detunings of the \ac{AFDM}, which is presented as $\Delta_{\mathrm{RF},m}(t)\triangleq\Delta_{l,m}(t)=\omega_{l,m}(t)-\omega_{34}$.

The behavior of electron transitions of the Rydberg atom for \ac{AFDM} detection is illustrated in Fig.~\ref{Atom_AFDM}. 
By reflecting these two key properties, the general measured signal in \eqref{measured_voltage} is modified, where the measured signal at the $m$-th chirp-subcarrier is 
\begin{align}\label{AFDM_Measured_Sig}
y_{m}(t) &\triangleq 
\Pi(\Omega_{\mathrm{RF}}(t), \Delta_{\mathrm{RF},m}(t)) + n_{m}(t) \notag
\\
& \, {\approx} \,\,\Pi\left(\Omega_{l}(t), \Delta_{l,m}(t)\right)+\frac{\mu_{34}}{\hbar}\Upsilon\left(\Omega_{l}(t), \Delta_{l,m}(t)\right) \notag \\ & \times |E_{s,m}(t)| \cos (\Delta \phi_{m}(\tau,\tau_{l}))+n_{m}(t),
\end{align}
where $|E_{s,m}(t)|$ is the signal field strength of the $m$-th subcarrier.

The detailed analysis of the measured signal model for \ac{AFDM} \eqref{AFDM_Measured_Sig} with \ac{RAQRs} will be elaborated in the next Section~\ref{Sec-III}.

\section{System Model and Problem Analysis}\label{Sec-III}
In this section, we review the \ac{RAQRs} system model with conventional \ac{SC-AFDM}.
Thereafter, we analyze the optical ambiguity problem of \ac{SC-AFDM} for delay-Doppler estimation with \ac{RAQRs}.

\subsection{Review of Single-Chirp AFDM} \label{Sec-III-A}
The transmitted \ac{DAFT}-domain vector $x[m]$ is mapped onto the discrete time domain signal $s[n]$ using the \ac{IDAFT}, which is presented as \cite{AFDM_Marios}
\begin{equation}
s[n]=\frac{1}{\sqrt{N}} \sum_{m=0}^{N-1} x[m] e^{j 2 \pi\left(c_2 m^2+\frac{1}{N} m n+c_1 n^2\right)},\label{tx_sig_discrete}
\end{equation}
where $c_{1}$ and $c_{2}$ are the post-chirp and pre-chirp parameters of the \ac{IDAFT}, affecting various properties of \ac{AFDM}.

Given the above, the continuous time version of the transmitted signal in \eqref{tx_sig_discrete} can be written as \cite{yin2025ambiguity}
\begin{equation}
s(t)= \sum_{m=0}^{N-1} x[m] e^{j 2 \pi(c_{2}m^2+\phi_{m}(t))}, \quad 0 \leq t < T,
\label{tx_sig}
\end{equation}
where $T=N \Delta t$, with the Nyquist sampling rate $\frac{1}{\Delta t}$, is the duration of the instantaneous phase function of the $m$-th chirp $\phi_{m}(t)$, defined as piece-wise manner as \cite{AFDM_Marios}
\begin{equation} \label{instantaneous_phase}
\phi_m(t)=\tilde{c}_1 t^2+\frac{m}{T}t-\frac{q}{\Delta t}t, \quad t_{m, q} \leq t<t_{m, q+1},
\end{equation}
with $\tilde{c}_{1}=\frac{c_{1}}{(\Delta t)^2}$, and $t_{m,q}$ is the $q$-th spectrum wrapping point of $m$-th chirp subcarrier.

Notice that $c_1$ controls the frequency dispersion of the signal, which can be changed and optimized according to the Doppler characteristics of the \ac{DD} channel~\cite{tek2025novel}.

\subsection{Sensing Scenario and Received \ac{AFDM} Signal Model} \label{Sec-III-B}

\begin{figure}[t]
\centering
\includegraphics[width=\linewidth]{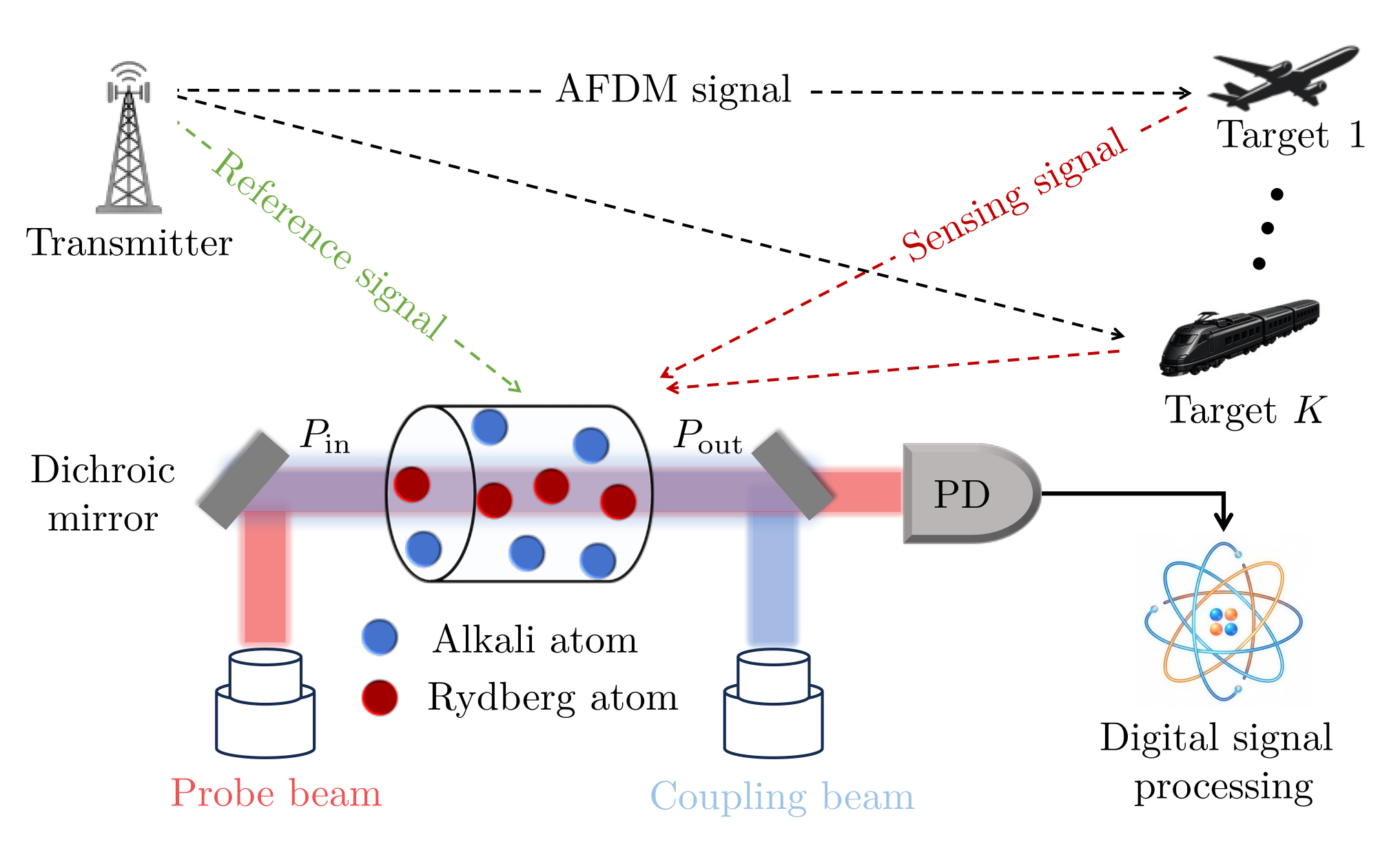}
\caption{Quasi-monostatic multi-target sensing scenario with AFDM signal for delay-Doppler estimation with \ac{RAQRs}.}
\label{Scenario}
\vspace{-2ex}
\end{figure}

Consider a quasi-monostatic sensing of a multi-moving target in the high-mobility scenario, resulting in a \ac{DD} channel~\cite{Rou_SPM24}.
As illustrated in Fig.~\ref{Scenario}, we assume the transmitter broadcasts the \ac{AFDM} signal in downlink to $K$ moving targets and a static \ac{RAQRs}.
Following the received AFDM signal model in \cite{yin2025ambiguity}, the Rabi frequency of the target signal $\Omega_{s}(t)$ and reference signal $\Omega_{l}(t)$ (line-of-sight only) are given by
\begin{equation}\label{Rabi_Freq_Target}
\Omega_{s}(t)=\frac{\mu_{34}}{\hbar}\left|\sum_{k=1}^{K} h_{k}s(t-\tau_{k})e^{j 2 \pi \nu_k t} \right|,
\end{equation}
\begin{equation}\label{Rabi_Freq_Target_2}
\Omega_{l}(t)=\frac{\mu_{34}}{\hbar}\left|h_{l}s(t-\tau_{l})e^{j 2 \pi \nu_l t} \right|,
\end{equation}
where $h_{k},\nu_{k}$ and $\tau_{k}$ are the channel gain, Doppler shift, and the propagation delay of the $k$-th target to \ac{RAQRs}, respectively.

Likewise, $h_{l}$ and $\nu_{l}$ are the channel and Doppler shift of the \ac{LOS} path between transmitter and \ac{RAQRs}. 
Trivially, $\nu_{l}=0$ since the \ac{RAQRs} are fixed without mobility.
In monostatic sensing scenario, the range and velocity of the $k$-th target, $R_{k}$ and $v_{k}$, are given by $R_k=\frac{c\tau_{k}}{2}$ and $v_{k}=\frac{2\pi c\nu_{k}}{2 \omega_{\mathrm{RF}}}$, where $c$ denotes the speed of light.

We employ the self-heterodyne sensing technique \cite{cui2025realizing}, in which the transmitter itself acts as the \ac{LO} to extract the delay-Doppler from the received signal. 
In this scheme, the measured signal's frequency component is determined by the phase difference between the \ac{LO} and the target signal phase profiles.
Based on \eqref{instantaneous_phase}, the phase function for \ac{AFDM} $\Delta\phi_{m}(\tau_{k},\tau_{l},\nu_{k})$ in a \ac{DD} channel is given by
\begin{align} \label{phase_diff_AFDM}
\Delta \phi_m\left(\tau_{k},\tau_{l},\nu_{k}\right) & =\phi_m\left(t-\tau_{k}\right)-\phi_m\left(t-\tau_{l}\right)+\nu_{k}t  \\
& =\left(2 \tilde{c}_1\left(\tau_{l}-\tau_{k}\right)+\nu_{k}\right) t  \nonumber \\ &+\left(\tau_{k}-\tau_{l}\right)\left(-\frac{m}{T}+\frac{q}{\Delta t}+\tilde{c}_1(\tau_{k}+\tau_{l})\right). \nonumber
\end{align}

Furthermore, the instantaneous frequency of \ac{AFDM} coincides with the derivative of $\phi_{m}(t)$ \cite{AFDM_Marios}. 
Therefore, the frequency detuning of the reference signal at the $m$-th chirp subcarrier $\Delta_{l,m}(t)$ is presented as
\begin{equation} \label{detuning_ref}
\begin{aligned}
\Delta_{l,m}(t) & =\frac{d \phi_{m}(t-\tau_{l})}{dt}-\omega_{34} \\ & = 2\tilde{c}_{1}(t-\tau_{l})+\frac{m}{T}-\frac{q}{\Delta t}-\omega_{34}.
\end{aligned}
\end{equation}

The noise of the $m$-th chirp $n_{m}(t)$ is modeled as Gaussian noise $n_{m}(t)$, where its autocorrelation satisfies $\mathrm{E}\left(n_{m}\left(t^{\prime}\right) n_{m}(t)\right)=\sigma_{m}^2(t) \delta\left(t-t^{\prime}\right)$, and is given by the superposition of the intrinsic noise $\sigma_{\mathrm{int},m}^2(t)$ and the extrinsic noise $\sigma_{\mathrm{ext},m}^2(t)$, which are 
\begin{equation}\label{noise_intrinsic_2}
\sigma_{\mathrm{int},m}^2(t)=q R_{\mathrm{T}} \Pi\left(\Omega_l(t), \Delta_{l,m}(t)\right),
\end{equation}
\begin{equation}\label{noise_intrinsic_3}
\sigma_{\mathrm{ext},m}^2(t)=\frac{\mu_{34}^2}{\hbar^2} \Upsilon^2\left(\Omega_l(t), \Delta_{l,m}(t)\right)\left\langle E_I^2\right\rangle. 
\end{equation}
%
Given the above, the noise power $\sigma_{m}^2(t)$ is given by
\begin{equation} \label{noise_tot}
\sigma_{m}^2(t)=\sigma_{\mathrm{int},m}^2(t)+\sigma_{\mathrm{ext},m}^2(t).
\end{equation}
Eventually, the received \ac{AFDM} signal at the $m$-th chirp subcarrier is presented as
\begin{align}\label{power_AFDM}
y_{m}(t)& =  \;\Pi\left(\Omega_l(t), \Delta_{l, m}(t)\right)+\frac{\mu_{34}}{\hbar} \Upsilon\left(\Omega_l(t), \Delta_{l, m}(t)\right) \\
& \times \frac{1}{N}\sum_{k=1}^{K} |E_{s,k,m}(t)| \cos \left(\Delta \phi_m\left(\tau_{k}, \tau_{l}, \nu_k\right)\right)+n_m(t), \notag
\end{align}
where $|E_{s,k,m}(t)|$ is the field strength of the $m$-th subcarrier signal from the $k$-th target. 

Here, we consider the fixed power transmission so that the sum of the field strength $|E_{s,k,m}(t)|$ is defined as $\sum_{k=1}^{K}|E_{s,k,m}(t)|^2=P_{s}$, where $P_{s}$ denotes the received signal power.

\subsection{Joint Delay-Doppler Estimation Problem with \ac{RAQRs}}\label{Sec-III-C}

In light of the measured signal model formulation of \ac{AFDM} in a \ac{DD} channel under \ac{RAQRs}, the objective of this paper is to accurately estimate the delay-Doppler information of the target $[\tau,\nu_{r}]$ from the measurements.
To achieve this goal, we leverage the atomic autocorrelation property of self-heterodyne sensing, which maps the target's delay to a fluctuation frequency of the output voltage of \ac{RAQRs} \cite{cui2025realizing}.
However, the classical \ac{SC-AFDM} brings a critical problem for accurate delay-Doppler estimation with \ac{RAQRs}. 

To begin with, the instantaneous phase function in \eqref{phase_diff_AFDM} is modeled as the fluctuation frequency and the phase of the output voltage, which is presented as
\begin{equation} \label{phase_re}
\Delta \phi_m\left(\tau_{k}, \tau_{l}, \nu_k\right)=-\omega_{m,k}t-\varphi_{m,k},
\end{equation}
where $\omega_{m,k}$ and $\varphi_{m,k}$ are fluctuation frequency and phase of the $m$-th chirp-subcarrier from the $k$-th target.

Then, by substituting \eqref{phase_re} into the \eqref{phase_diff_AFDM}, the fluctuation frequency $\omega_{m,k}$ is given by
\begin{equation} \label{probe_freq}
\omega_{m,k}=2 \tilde{c}_1\left(\tau_{k}-\tau_{l}\right)-\nu_k.
\end{equation}

From \eqref{probe_freq}, it is observed that the two key parameters to be estimated, time delay $\tau_{k}$ and the Doppler shift $\nu_{k}$ of the $k$-th target, are correspondingly embedded in the fluctuation frequency $\omega_{m,k}$.
In other words, this atomic autocorrelation property of self-heterodyne sensing converts the delay-Doppler estimation into the fluctuation frequency estimation problem.

For the $m$-th chirp-subcarrier, we define the fluctuation frequency vector $\boldsymbol{\omega}_{m}$ as
\begin{equation} \label{Measured_Freq_Vec}
\boldsymbol{\omega}_{m}\triangleq\left[\begin{array}{c}
\omega_{m,1} \\
\omega_{m,2} \\
\vdots \\
\omega_{m,K}
\end{array}\right]=\left[\begin{array}{c}
2 \tilde{c}_1 \tau_{1}-2 \tilde{c}_1 \tau_{l}-\nu_1 \\
2 \tilde{c}_1 \tau_{2}-2 \tilde{c}_1 \tau_{l}-\nu_2 \\
\vdots \\
2 \tilde{c}_1 \tau_{K}-2 \tilde{c}_1 \tau_{l}-\nu_K
\end{array}\right].
\end{equation}
\begin{remark}
From the equation \eqref{Measured_Freq_Vec}, we can observe that all the elements of the frequency vector are independent of chirp-subcarrier index $m$, which brings the same frequency vector for all chirp-subcarriers (i.e., $\boldsymbol{\omega}_{0}=\boldsymbol{\omega}_{1}=\cdots=\boldsymbol{\omega}_{N-1}$).
This easily confirms that the estimation problem of $\tau_{k}$ and $\nu_k$ for $k=1,2,\cdots,K$ from the fluctuation frequencies $\{\boldsymbol{\omega}_{0},\boldsymbol{\omega}_{1}\ldots,\boldsymbol{\omega}_{N-1}\}$ is an ill-posed linear inverse problem.
This is because there are a total of $2\times K$ unknown target parameters, $\{\tau_{k}, \nu_{k}\}_{k=1}^{K}$, with only $K$ different equations.
\end{remark}

\section{Multi-Chirp AFDM Waveform Design}\label{Sec-IV}
Throughout our rigorous analysis, we observed that it is challenging to extract the delay-Doppler information with the conventional \ac{SC-AFDM} due to the under-determined estimation model.
To address this problem, we propose the new waveform design of \ac{AFDM} that is compatible with \ac{RAQRs}-based delay-Doppler estimation, termed \ac{MC-AFDM}.
The core idea of \ac{MC-AFDM} is replacing the unique post-chirp $\tilde{c}_{1}$ in conventional \ac{SC-AFDM} by different multiple post-chirps.
The waveform designs for \ac{MC-AFDM} are described as follows. 

\subsection{Multi-Chirp AFDM}\label{Sec-IV-A}
The proposed \ac{MC-AFDM} utilizes multiple segmented time durations, where the \ac{AFDM} frame of every segmented time duration utilizes different post-chirps consecutively. 
To begin with, we define the transmitted signal of a \ac{MC-AFDM}, which is given by
\begin{equation} \label{STMC-AFDM_Frame}
s(t)= \sum_{m=0}^{N-1} x[m] e^{j 2 \pi(c_{2}m^2+\phi^{(p)}_{m}(t))}, \quad T^{(p-1)} \leq t < T^{(p)},
\end{equation}
where $T^{(p)}-T^{(p-1)}$ and $\phi^{(p)}_{m}(t)$ are the time duration and the phase function for the $p$-th post chirp $c^{(p)}_1$, respectively\footnote{The number of chirp-subcarriers $N$ for each segmented-time durations are all equal.}.

Note that $T^{(0)}=0$ and $T^{(P)}=T_\mathrm{tot}$, where each frame has an equal duration $T^{(p)}-T^{(p-1)}=N\Delta t=\frac{T_{\mathrm{tot}}}{P}$ corresponding to an \ac{AFDM} symbol so that $T^{(p)}=pN\Delta t$.
Here, $P$ is the number of post-chirps.
Then, the phase function $\phi^{(p)}_{m}(t)$ is modified as
\begin{equation} \label{Phase-ST-AFDM_Frame}
\phi^{(p)}_m(t)=\tilde{c}^{(p)}_1 t^2+\frac{m}{N\Delta t}t-\frac{q}{\Delta t}t, 
\end{equation}
where $\tilde{c}^{(p)}_1=\frac{c^{(p)}_{1}}{\Delta t^2}$ is the $p$-th chirp rate with post-chirp $c^{(p)}_{1}$.

Then, by substituting \eqref{Phase-ST-AFDM_Frame} into \eqref{phase_diff_AFDM}, the fluctuation frequency of the $m$-th subcarrier in \eqref{probe_freq} is reformulated as
\begin{equation} 
\omega^{(p)}_{m,k}= 2\tilde{c}^{(p)}_{1}(\tau_{k}-\tau_{l})-\nu_{k}, \quad T^{(p-1)} \leq t < T^{(p)}.
\label{DC-AFDM_Re} 
\end{equation}
%

%

\begin{figure}[t]
\centering
\includegraphics[width=\linewidth]{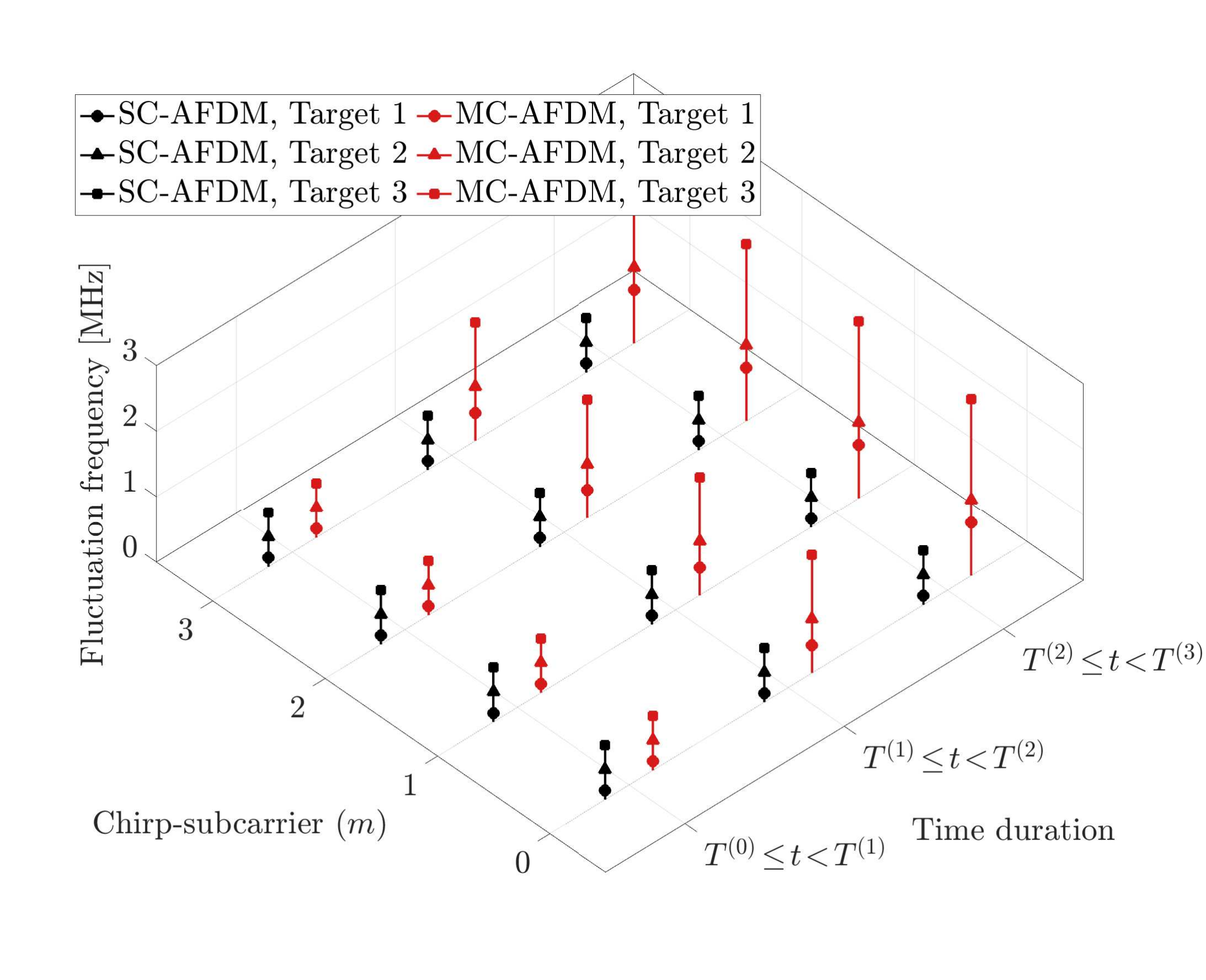}
\caption{3D-fluctuation frequency spectrum analysis of \ac{SC-AFDM} and the proposed \ac{MC-AFDM}. Here, the post-chirp for \ac{SC-AFDM} is set to $c^{(1)}_{1}$, and the post-chirps of \ac{MC-AFDM} are set to $c_{1}^{(1)}$, $2c_{1}^{(1)}$, and $3c_{1}^{(1)}$, respectively.}
\label{Fluct-Spectrum}
\vspace{-2ex}
\end{figure}

From equation \eqref{DC-AFDM_Re}, it is trivial that the problem is now solvable since the utilization of the multiple post-chirps, $c^{(1)}_{1},c^{(2)}_{1},\cdots,c^{(P)}_{1}$, brings the additional different fluctuation frequencies for each time duration.
Specifically, $K$ different fluctuation frequencies in the \ac{SC-AFDM} are increased to the $K \times P$ different fluctuation frequencies in \ac{MC-AFDM}, which transforms the estimation problem from an under-determined system to an over-determined system, thereby enabling the joint delay-Doppler estimation.

For the intuitive understanding, the fluctuation frequency spectrum of the proposed \ac{MC-AFDM} and the conventional \ac{SC-AFDM} is visualized in Fig.~\ref{Fluct-Spectrum}.
Here, the number of targets, the number of chirp-subcarriers, and the number of post-chirps are set to $K=3$, $N=4$, and $P=3$, respectively.
The key observation here is that the conventional \ac{SC-AFDM} shows the same set of target frequency measurements for all time durations, due to the unique and single post-chirp parameter $c^{(1)}_{1}$.
Meanwhile, the target fluctuation frequency measurements of the proposed \ac{MC-AFDM} are distinct for each frame due to the different post-chirps $c^{(1)}_{1}, 2c^{(1)}_{1}$, and $3c^{(1)}_{1}$.
This diversity of optical measurement enables the delay-Doppler estimation with \ac{RAQRs} by solving the ambiguity of the unique optical measurement problem in Remark 1.

\subsection{Multi-Chirp AFDM Measured Signal Model}\label{Sec-IV-B}
The measured model of \ac{AFDM} derived in Sec.~\ref{Sec-III-B} is reformulated to incorporate the $p$-th post-chirp rate $\tilde{c}^{(p)}_{1}$ of the \ac{MC-AFDM}.
By substituting $\tilde{c}^{(p)}_{1}$ into \eqref{detuning_ref}, the frequency detuning of the reference signal at the $m$-th chirp-subcarrier during the $p$-th frame is given by
\begin{align}\label{detuning_ref_MCAFDM}
\Delta_{l,m}^{(p)}(t) & =\frac{d \phi^{(p)}_{m}(t-\tau_{l})}{dt}-\omega_{34} \\ & = 2\tilde{c}^{(p)}_{1}(t-\tau_{l})+\frac{m}{N \Delta t}-\frac{q}{\Delta t}-\omega_{34}. \notag
\end{align} 
Accordingly, the intrinsic and extrinsic noise powers of noise $n^{(p)}_{m}(t)$ are presented as\footnote{Here, the Rabi frequency of the \ac{LO}, $\Omega_{l}(t)$, is independent of $p$ since the \ac{AFDM} signal has a constant envelope for $p=1,2,\cdots,P$, which approximates $\Omega_l(t) \approx \frac{\mu_{34}}{\hbar}\left|h_l\right| \sqrt{P_s}$.}
\begin{equation}\label{noise_reform}
(\sigma^{(p)}_{\mathrm{int},m}(t))^2=q R_{\mathrm{T}} \Pi\left(\Omega_l(t), \Delta^{(p)}_{l,m}(t)\right),
\end{equation}
\begin{equation}\label{noise_reform_2}
(\sigma_{\mathrm{ext},m}^{(p)}(t))^2=\frac{\mu_{34}^2}{\hbar^2} \Upsilon^2\left(\Omega_l(t), \Delta^{(p)}_{l,m}(t)\right)\left\langle E_I^2\right\rangle, 
\end{equation}
%
and the total noise power is given by $(\sigma^{(p)}_{m}(t))^2=(\sigma^{(p)}_{\mathrm{int},m}(t))^2+(\sigma^{(p)}_{\mathrm{ext},m}(t))^2$.

Eventually, the measured signal for \ac{MC-AFDM} at the $m$-th chirp-subcarrier during the $p$-th frame is presented as
\begin{align}\label{power_MCAFDM}
y^{(p)}_{m}(t)& =  \;\Pi(\Omega_l(t), \Delta^{(p)}_{l, m}(t))+\frac{\mu_{34}}{\hbar} \Upsilon(\Omega_l(t),  \Delta^{(p)}_{l, m}(t))  \\
& \times \frac{1}{N}\sum_{k=1}^{K} |E_{s,k,m}(t)| \cos (\Delta \phi^{(p)}_m(\tau_{k}, \tau_{l}, \nu_k))+n^{(p)}_m(t), \notag
\end{align}
where the phase function $\Delta \phi^{(p)}_m(\tau_{k}, \tau_{l}, \nu_k)$ is obtained via substituting \eqref{Phase-ST-AFDM_Frame} into \eqref{phase_diff_AFDM}.

\subsection{Chirp Parameter Design}\label{Sec-IV-C}
We now reveal that the multiple post-chirps enable the \ac{AFDM}-based joint delay-Doppler estimation with \ac{RAQRs}.
However, several important conditions should be strictly followed for compatibility with \ac{RAQRs}, which motivates us to design the post-chirp parameter for \ac{MC-AFDM}.
\subsubsection{Number of post-chirps} Adopting the fluctuation frequency model in \eqref{DC-AFDM_Re}, the fluctuation frequency vector $\boldsymbol{\omega}_{k}$ can be presented as
\begin{equation} \label{Fluctuation frequency}
\underbrace{\left[\begin{array}{c}
\omega_{k}^{(1)} \\
\omega_{k}^{(2)} \\
\vdots \\
\omega_{k}^{(P)}
\end{array}\right]}_{\boldsymbol{\omega}_{k} \in \mathbb{R}^{P \times 1}}=\underbrace{\left[\begin{array}{cc}
2 \tilde{c}_1^{(1)} & -1 \\
2 \tilde{c}_1^{(2)} & -1 \\
\vdots& \vdots \\
2 \tilde{c}_1^{(P)} & -1
\end{array}\right]}_{\triangleq \mathbf{C}_1 \in \mathbb{R}^{P \times 2}} \underbrace{\left[\begin{array}{c}
\tau_{k}-\tau_{l} \\
\nu_k
\end{array}\right]}_{\boldsymbol{\vartheta}_{k}\in \mathbb{R}^{2 \times 1}},
\end{equation}
where $\omega^{(p)}_{k}=\frac{1}{N}\sum_{m=0}^{N-1}\omega^{(p)}_{m,k}$ is the average fluctuation frequency of all chirp-subcarriers at the $k$-th target\footnote{Although $\omega^{(p)}_{m,k}$ is independent of the chirp-subcarrier index $m$ as shown in \eqref{DC-AFDM_Re}, the estimated fluctuation frequency $\hat{\omega}^{(p)}_{m,k}$ might be different across chirp-subcarriers due to the error of the fluctuation frequency estimation. Therefore, we define $\omega^{(p)}_{k}$ as the average of the fluctuation frequency of all chirp-subcarriers.}, $\mathbf{C}_1$ is the post-chirp matrix, and $\boldsymbol{\vartheta}_{k}=[\tau_{k}-\tau_{l},\nu_{k}]^{\mathsf{T}}$ is the channel parameter vector of the $k$-th target.
\begin{remark} 
     The post-chirp matrix $\mathbf{C}_{1}$ in \eqref{Fluctuation frequency} is a full-rank if there exist at least two distinct chirp rates within the chirp-rate vector $\mathbf{\tilde{c}}_{1}=[\tilde{c}^{(1)}_{1},\tilde{c}^{(2)}_{1},\cdots,\tilde{c}^{(P)}_{1}]$, which is $\tilde{c}_1^{(p_{1})} \neq \tilde{c}_1^{\left(p_{2}\right)}$ for at least one pair of $p_{1}$ and $p_{2}$.
     This observation confirms that the minimum number of post-chirps required for unambiguous joint delay-Doppler estimation is $P=2$, which coincides with the dual-chirp \ac{AFDM} in our previous work \cite{kim2026dual}.
\end{remark}

\subsubsection{Maximum chirp-rate} Recall that the \ac{RAQRs} have low supportable instantaneous bandwidth.
Due to this characteristic, the maximum chirp-rate, $\tilde{c}_{1}^{(\mathrm{max})}=\max \{\tilde{c}^{(p)}_{1} \mid p \in\{1, \ldots, P\}\}$, should be carefully designed.
The maximum instantaneous bandwidth of the measured signal in \eqref{power_AFDM} should hold the following inequality
\begin{equation} \label{instantaneous_bandwidth}
\frac{|\omega_{m, k}^{(\mathrm{max})}|}{2 \pi}=\frac{|2 \tilde{c}_1^{(\mathrm{max})}\left(\tau_k-\tau_l\right)-\nu_k|}{2\pi} \leq B_{\mathrm{RAQR}},
\end{equation}
where $B_{\mathrm {RAQR}}$ denotes the allowable instantaneous bandwidth of \ac{RAQRs} which is typically lower than 10 MHz.

Since the parameters $[\tau_{k},\nu_{k}]$ are unknown, we consider a range of delay and Doppler, which are $\tau_k \in\left[\tau_{\mathrm{min}}, \tau_{\mathrm{max}}\right]$ and $\nu_k \in\left[-\nu_{\mathrm{max} }, \nu_{\mathrm{max} }\right]$, where $\tau_{\mathrm{max}}$ and $\nu_{\mathrm{max}}$ are maximum delay and Doppler within the range, respectively.
Then, by substituting the largest fluctuation frequency $\omega_{\mathrm{max} }^{(p)}=2 \tilde{c}_1^{(p)} (\tau_{\mathrm{max}}-\tau_{l})+\nu_{\mathrm{max}}$, which assumes $\tau_{k}=\tau_{\mathrm{max}}$ and $\nu_{k}=-\nu_{\mathrm{max}}$, into the equation \eqref{instantaneous_bandwidth}, the condition for maximum chirp-rate is given by
\begin{equation} \label{condition_chirprate}
\tilde{c}_1^{(\mathrm{max})} \leq \frac{2 \pi B_{\mathrm {RAQR}}-\nu_{\max }}{2 (\tau_{\mathrm{max}}-\tau_{l})}.
\end{equation}

Since the \ac{RF} signal bandwidth that is larger than $B_{\mathrm{RAQR}}$ severely damages the sensitivity of \ac{RAQRs} due to the limited response time of the Rydberg atomic quantum system~\cite{yang2024highly,bohaichuk2022origins}, the above condition in \eqref{condition_chirprate} should be strictly followed for \ac{MC-AFDM} waveform design.

\subsubsection{Post-chirp matrix design} Given the maximum chirp-rate condition, the optimization of post-chirp matrix $\mathbf{C}_{1}$ can further enhance the delay-Doppler estimation performance. 
The optimization of $\mathbf{C}_{1}$ can be achieved by determining its well-conditioning, where a poor conditioning of $\mathbf{C}_{1}$ amplifies the fluctuation frequency estimation error.
To be specific, the fluctuation frequency model in \eqref{Fluctuation frequency} is given by
\begin{equation} \label{fluctuation_model}
\boldsymbol{\hat{\omega}}_{k} \triangleq \mathbf{C}_{1}\boldsymbol{\vartheta}_{k}+\boldsymbol{\epsilon},
\end{equation}
where $\boldsymbol{\hat{\omega}}_{k}$ is the estimated fluctuation frequency and $\boldsymbol{\epsilon}$ is a fluctuation frequency estimation error that arises in the frequency estimation procedure.

Then, the delay-Doppler estimation accuracy is governed by a condition number of $\mathbf{C}_{1}$, which is presented as
\begin{equation} \label{condition_number}
\kappa\left(\mathbf{C}_1\right)=\sqrt{\frac{\lambda_1}{\lambda_2}},
\end{equation}
where $\lambda_{1}$ and $\lambda_{2}$ are eigenvalues $(\lambda_{1} \geq \lambda_{2} > 0)$ of the matrix $\mathbf{C}_1^{\mathsf{T}} \mathbf{C}_1$ \cite{edelman1988eigenvalues}.
Here, the maximizing delay-Doppler estimation accuracy with the estimated fluctuation frequency model \eqref{fluctuation_model} is transformed into optimizing the condition number of the post-chirp matrix, which is formulated as
\begin{subequations}
\begin{align}
\text{(P)} \quad & \min _{\{\tilde{c}^{(p)}_{1}\}^{P}_{p=1}}  \kappa\left(\mathbf{C}_1\right),  \quad \forall p \neq p^{\prime},\label{opt_prob}  \\
& \  \text{s.t. }   \tilde{c}_{1}^{(\mathrm{min}) } \leq \tilde{c}_1^{(p)} \leq \tilde{c}_{1}^{(\mathrm{max}) }, \ \ \tilde{c}_1^{(p)} \neq \tilde{c}_1^{\left(p^{\prime}\right)} .
\label{opt_const} 
\end{align}
\end{subequations}
where $ \tilde{c}_{1}^{(\mathrm{min}) }$ denotes the minimum post chirp-rate.
\begin{theorem} \label{Theorem_1} 
     For a fixed number of $P$, the condition number of the post-chirp matrix $\kappa\left(\mathbf{C}_1\right)$ is minimized when the variance of post-chirp rate vector $\mathrm{Var}(\mathbf{\tilde{c}}_{1})$ is maximized.
     The maximization of $\mathrm{Var}(\mathbf{\tilde{c}}_{1})$ can be achieved by concentrating the chirp rates toward the edge of two values $\tilde{c}_{1}^{(\mathrm{min})}$ and $\tilde{c}_{1}^{(\mathrm{max})}$, which is presented as
     %
\begin{equation} 
\tilde{c}_{1}^{(p)}\triangleq \begin{cases}\tilde{c}_{1}^{(\mathrm{min})}+p\Delta \tilde{c}^{(\mathrm{min})}_{1}, & p=1,2, \dotsc,\lfloor \frac{P}{2} \rfloor \\ \tilde{c}_{1}^{(\mathrm{max})}-(P-p) \Delta \tilde{c}^{(\mathrm{min})}_{1}, & p=\lfloor \frac{P}{2} \rfloor+1,\dotsc,P \end{cases}
    \label{chirp_theorem_1}
\end{equation}
where $\Delta \tilde{c}^{(\mathrm{min})}_{1}$ denotes the minimum chirp-rate difference.
\end{theorem}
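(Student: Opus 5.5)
We compute the condition number explicitly from the $2\times 2$ Gram matrix, reduce it to a monotone function of the sample variance of the chirp rates, and then maximize that variance under the box and minimum-spacing constraints of (P).

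\textbf{Gram matrix.} Write $\bar c = \frac{1}{P}\sum_p \tilde c_1^{(p)}$ and $V=\mathrm{Var}(\mathbf{\tilde c}_1)=\frac1P\sum_p(\tilde c_1^{(p)}-\bar c)^2$. Then
\begin{equation*}
\mathbf{C}_1^{\mathsf T}\mathbf{C}_1=P\begin{bmatrix}4(V+\bar c^2) & -2\bar c\\ -2\bar c & 1\end{bmatrix}.
\end{equation*}
Hence $\mathrm{tr}=P(4V+4\bar c^2+1)$ and $\det=4P^2V$.

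\textbf{Reduction to one variable.} For a $2\times2$ positive definite matrix,
\begin{equation*}
\kappa^2+\kappa^{-2}=\frac{\mathrm{tr}^2}{\det}-2 .
\end{equation*}
The map $\kappa\mapsto\kappa^2+\kappa^{-2}$ is increasing on $\kappa\ge1$, so minimizing $\kappa$ is equivalent to minimizing
\begin{equation*}
g(V,\bar c)=\frac{(4V+4\bar c^2+1)^2}{4V}.
\end{equation*}
Differentiating in $V$ gives $\partial_V g = 4(4V+4\bar c^2+1)(4V-4\bar c^2-1)/(16V^2)$. This is negative, so $g$ is decreasing in $V$, whenever $4V<4\bar c^2+1$.

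This regime holds here. The rates are physically scaled (the $\tilde c_1^{(p)}$ are large relative to unity), and within the box $[\tilde c_1^{(\min)},\tilde c_1^{(\max)}]$ with $\tilde c_1^{(\min)}\ge0$ we have $V\le \bar c^2$. I would state $\tilde c_1^{(\min)}\ge 0$ as the operating assumption, or more generally require $V<\bar c^2+\tfrac14$.

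\textbf{Coupling through $\bar c$.} Changing the configuration changes $\bar c$ as well as $V$, so the claim that maximizing $V$ minimizes $\kappa$ needs care. I would handle it by reparametrizing the second column, equivalently replacing $\mathbf{C}_1$ by the centered matrix (a column operation), and interpreting the theorem in the regime where the $\bar c$ dependence is dominated. Concretely, with the range fixed, $\bar c$ stays near the midpoint for the symmetric edge configuration.

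\textbf{Maximizing the variance.} Under $\tilde c_1^{(\min)}\le \tilde c_1^{(p)}\le \tilde c_1^{(\max)}$ and pairwise gaps at least $\Delta\tilde c_1^{(\min)}$, $V$ is a convex function of the rate vector. Its maximum over this polytope-like feasible set (sorted rates with spacing constraints) is therefore attained at a vertex. I would then use an exchange argument. Moving any interior rate outward, away from $\bar c$ and toward the nearer edge, strictly increases $\sum_p(\tilde c_1^{(p)}-\bar c)^2$ until it abuts the box boundary or a neighbour at the minimum spacing. This packs $\lfloor P/2\rfloor$ rates against $\tilde c_1^{(\min)}$ and the rest against $\tilde c_1^{(\max)}$ at spacing $\Delta\tilde c_1^{(\min)}$. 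A balance argument (moving one rate from the larger cluster to the smaller increases $V$ until the split is even) fixes the split as $\lfloor P/2\rfloor$ versus $P-\lfloor P/2\rfloor$, which gives (\ref{chirp_theorem_1}).

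\textbf{Main obstacle.} The hard step is making the monotonicity rigorous despite the simultaneous change of $\bar c$. I would first try to show directly that the edge configuration minimizes $g$ jointly. The box constrains $\bar c$ only mildly, and the edge configuration maximizes $V$ while keeping $\bar c$ at the midpoint, so this should hold. If the joint comparison is not clean, I would fall back on stating the result for fixed $\bar c$.
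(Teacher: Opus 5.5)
Your reduction is the same as the paper's Appendix~\ref{app2}. The paper forms $\mathbf{C}_1^{\mathsf T}\mathbf{C}_1$, uses $(\kappa+\kappa^{-1})^2=D_1^2/D_2$ (equivalent to your $\kappa^2+\kappa^{-2}=\mathrm{tr}^2/\det-2$), and notes $D_2=4P^2\mathrm{Var}(\tilde{\mathbf c}_1)$. It then asserts that maximizing the variance minimizes $D_1/\sqrt{D_2}$, silently treating $D_1=P(4V+4\bar c^2+1)$ as constant. You are right that this is the crux. However, none of your patches closes the gap, and none can: the first sentence of the theorem is false as a statement about (P) in the paper's own regime.

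(i) $V\le\bar c^2$ does not follow from $\tilde c_1^{(\min)}\ge 0$. One rate at $\tilde c_1^{(\max)}$ and the others near $0$ gives $V\approx(P-1)\bar c^2$. So the sign condition $4V<4\bar c^2+1$ is not automatic, and imposing it is a hypothesis the theorem does not have.

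(ii) The joint comparison you expect ``should hold'' fails. The rates are of order $10^{10}$--$10^{12}$, so the $+1$ is negligible and $\kappa\approx\sqrt{g}\approx 2\,\overline{c^2}/\sqrt{V}$ with $\overline{c^2}=\frac{1}{P}\sum_p(\tilde c_1^{(p)})^2$. Thus $\kappa$ is driven by the magnitude of the rates, not only their spread. Put a fraction $f$ of the rates at $M=\tilde c_1^{(\max)}$ and the rest near $0$; this gives $\kappa\approx 2M\sqrt{f/(1-f)}$, so the balanced, variance-maximizing split $f=\tfrac12$ is beaten by every $f<\tfrac12$. With Table~\ref{table: simulation parameters_2}, the edge configuration \eqref{chirp_theorem_1} has $V\approx 4.25\times10^{24}$ and $\kappa\approx 9.6\times10^{12}$. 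The feasible single cluster $\tilde c_1^{(\min)}+k\,\Delta\tilde c_1^{(\min)}$, $k=0,\dots,15$, has $V\approx 7.9\times10^{22}$ but $\kappa\approx 2.4\times10^{12}$.

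(iii) Centering does not preserve $\kappa$. It is right-multiplication of $\mathbf{C}_1$ by a non-orthogonal $2\times 2$ matrix. The centered matrix has Gram matrix $P\,\mathrm{diag}(4V,1)$, and its condition number $\max\{2\sqrt{V},(2\sqrt{V})^{-1}\}$ actually grows with $V$ once $4V>1$.

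What you can prove is only a conditional statement. For configurations with the same $\bar c$ (e.g.\ symmetric about a fixed midpoint) and $4V<4\bar c^2+1$, $\kappa$ decreases in $V$. Alternatively, holding $D_1$ fixed, i.e.\ fixing $\sum_p(\tilde c_1^{(p)})^2$, gives monotonicity in $V$ with no regime condition, since then $g=(4\overline{c^2}+1)^2/(4V)$. That is what the paper's argument tacitly uses. Neither conditional version yields the theorem, because competing configurations differ in both $\bar c$ and $D_1$.

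Your variance-maximization step is correct and more rigorous than the paper's one-line ``spread the rates as far as possible''. After sorting, the feasible set is a polytope and $V$ is a convex quadratic, so the maximum is at a vertex. The vertices are the two packed clusters, and the balanced split wins. Two small corrections:
\begin{itemize}
\item The variance-increasing local move is away from $\bar c$, since $\partial V/\partial\tilde c_1^{(p)}=\frac{2}{P}(\tilde c_1^{(p)}-\bar c)$. This need not be toward the nearer edge.
\item Your vertex puts the lower cluster at $\tilde c_1^{(\min)},\dots,\tilde c_1^{(\min)}+(\lfloor P/2\rfloor-1)\Delta\tilde c_1^{(\min)}$. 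By contrast, \eqref{chirp_theorem_1} starts at $\tilde c_1^{(\min)}+\Delta\tilde c_1^{(\min)}$, so \eqref{chirp_theorem_1} is not exactly the variance maximizer.
\end{itemize}
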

\begin{proof}
    (See Appendix~\ref{app2}). 
\end{proof}

\begin{figure}[t]
\centering
\includegraphics[width=1\linewidth]{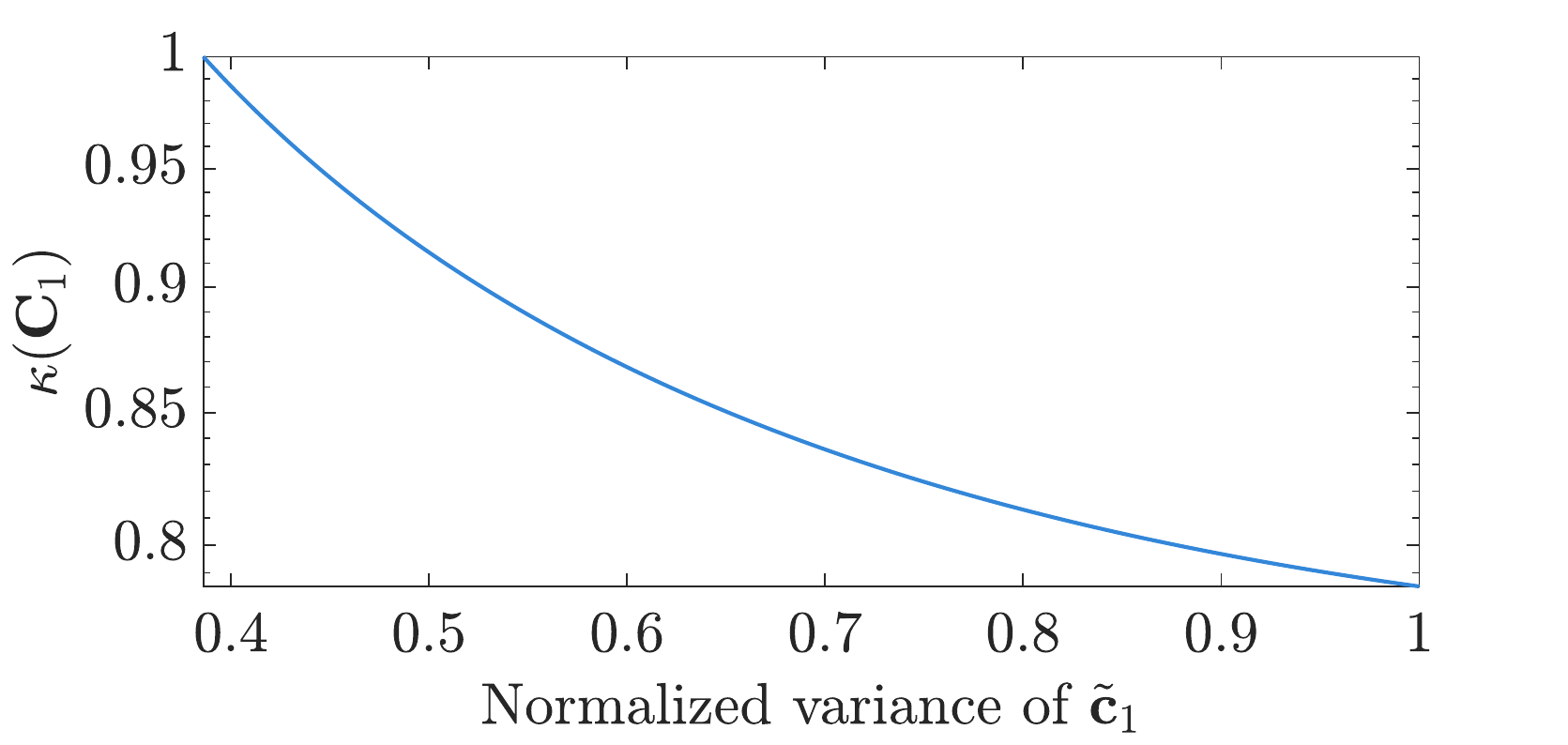}
\caption{Condition number of the post-chirp matrix $\kappa\left(\mathbf{C}_1\right)$ according to the normalized variance of $\tilde{\mathbf{c}}_{1}$. Here, the variance is set by varying the minimum post-chirp difference $\Delta \tilde{c}^{(\mathrm{min})}_{1}$.}
\label{kapaa_var}
\vspace{-2ex}
\end{figure}
The detailed analysis on $\tilde{c}_{1}^{(\mathrm{min})}$ and $\Delta \tilde{c}^{(\mathrm{min})}_{1}$ is elaborated in Remark~\ref{min_chirp}.
To intuitively understand the Theorem~\ref{Theorem_1}, Fig.~\ref{kapaa_var} shows the condition number of the post-chirp matrix $\kappa\left(\mathbf{C}_1\right)$ versus the normalized variance of $\mathbf{\tilde{c}}_{1}$.
We can observe that the condition number is decreased exponentially with respect to the variance of the chirp-rate vector, which supports Theorem~\ref{Theorem_1}.
The reduction of the condition number is beneficial for delay-Doppler estimation with \ac{MC-AFDM}, where its impact is analyzed in Sec.~\ref {Sec-VI}.

\section{Joint Delay-Doppler Estimation Algorithm with Multi-Chirp AFDM}\label{Sec-V}
In this section, we propose a joint delay-Doppler estimation algorithm with the \ac{MC-AFDM} signal.
The proposed algorithm comprises two sequential stages, which are fluctuation frequency estimation and \ac{LS}-based delay-Doppler estimation.
Furthermore, we analyze the theoretical lower bound of the delay and Doppler estimation for \ac{MC-AFDM}.

\begin{algorithm}[t]\label{algorithm1}
  \caption{\ac{OMP}-based delay-Doppler estimation algorithm for \ac{MC-AFDM} with \ac{RAQRs}}
  \Input{$\bar{y}^{(p)}_{m}(t)$, $\Omega_{l}(t)$, $\Delta^{(p)}_{l,m}(t)$, $N$, $P$, $K$, $G_{f}$, $S_{p}$}

  \For{$p={1,2,\cdots,P}$}{

  Calculate $\mathbf{y}_m^{(p)}$
  
  Construct dictionary matrix $\boldsymbol{\Psi}^{(p)}$

  \For{$m={0,1,\cdots,N-1}$}{

  Initialize $\mathbf{r}_{m}^{(p,0)}=\mathbf{y}_m^{(p)}$, $\mathcal{S}_{m}^{(p,0)}=\emptyset$, and $\boldsymbol{\Phi}_m^{(p, 0)}=\emptyset$

  \For{$l={1,2,\cdots,K}$}{ 
  $g^{\star}=\underset{g\in\{1,2,\cdots G_{f}\}}{\arg \max }\left|\left(\psi_g^{(p)}\right)^\mathsf{H} \mathbf{r}_{m}^{(p,l-1)}\right|$
  
   $\mathcal{S}_{m}^{(p,l)}=\mathcal{S}_{m}^{(p,l-1)} \cup g^{\star}$

   $\boldsymbol{\Phi}_m^{(p, l)}=[\boldsymbol{\Phi}_m^{(p, l-1)}, \psi_{g^{{\star}}}^{(p)}]$

   $\hat{\mathbf{c}}_{m}^{(p,l)}=(\boldsymbol{\Phi}_m^{(p, l)})^{\dagger} \mathbf{y}_m^{(p)}$

   $\mathbf{r}_{m}^{(p,l)}=\mathbf{y}_m^{(p)}-\boldsymbol{\Phi}_m^{(p, l)} \hat{\mathbf{c}}_{m}^{(p,l)}$

  }

  $\hat{\omega}_{m, k}^{(p)}=\omega_{g^{(\star,k)}}$

  }

    $\hat{\omega}_k^{(p)}=\frac{1}{N} \sum_{m=0}^{N-1} \hat{\omega}_{m, k}^{(p)}$
  
  }
  \For{$k=1,2,\cdots,K$}{
  Construct estimated frequency vector $\hat{\boldsymbol{\omega}}_k$

  Calculate $\boldsymbol{\vartheta}_{k}=\left(\mathbf{C}_1^{\top} \mathbf{C}_1\right)^{-1} \mathbf{C}_1^{\top} \hat{\boldsymbol{\omega}}_{k}$

  Estimate $\hat{\tau}_{k}=[\boldsymbol{\vartheta}_{k}]_{1}+\tau_{l}$ and $\hat{\nu}_{k}=[\boldsymbol{\vartheta}_{k}]_{2}$
  }

\Output{Estimated delay and Doppler $\left\{\hat{\tau}_k,\hat{\nu}_k\right\}_{k=1}^K$}
\end{algorithm}

\subsection{Fluctuation Frequency Estimation}\label{Sec-V-A}

To estimate the fluctuation frequency from the measured signal in \eqref{power_AFDM}, we adopt the \ac{NLS}-based optimization method in \cite{cui2025realizing}.
First, we define the normalized received signal of the \ac{MC-AFDM} at the $m$-th chirp-subcarrier, which is given by
\begin{align} \label{norm_received}
\bar{y}^{(p)}_{m}(t) & =\frac{y^{(p)}_{m}(t)-\Pi\left(\Omega_l(t), \Delta^{(p)}_{l,m}(t)\right)} {\sigma^{(p)}_{m}(t)} \notag \\
& =\sum^{K}_{k=1}\varrho^{(p)}_{m,k}(t) h_{l} \cos (\omega^{(p)}_{m,k} t+\varphi^{(p)}_{m,k})+\bar{n}^{(p)}_{m}(t),
\end{align}
where $\varphi^{(p)}_{m,k}$ is a phase of \ac{MC-AFDM} and $\bar{n}^{(p)}_{m}(t)\sim \mathcal{N}(0,1)$ is a unit-power Gaussian noise.

Here, the time-varying amplitude $\varrho_{m, k}^{(p)}(t)$ is given by
\begin{equation} \label{norm_gain}
\varrho^{(p)}_{m,k}(t)=\frac{\mu_{34} \Upsilon\left(\Omega_{l}, \Delta^{(p)}_{l,m}(t)\right) \sqrt{P_{s}}}{\hbar\sigma^{(p)}_{m}(t) \sqrt{KN}} .
\end{equation}

Then, following the \ac{NLS} problem in \cite{cui2025realizing}, the multi-target frequency estimation problem for the $p$-th frame is given by
\begin{equation} \label{NLS_Prob}
\max _{{\omega}^{(p)}_{m,k}}\left|\int_{T^{(p-1)}}^{T^{(p)}} \bar{y}^{(p)}_{m}(t) \varrho^{(p)}_{m}(t) e^{j \omega^{(p)}_{m,k} t} \mathrm{~d} t\right|^2,
\end{equation}
for $k=1,2,\cdots,K$. Note that the entire fluctuation frequency of \ac{MC-AFDM} can be acquired by conducting this top-$K$ frequency peak estimation for $p=1,2,\cdots, P$.  

To solve the problem \eqref{NLS_Prob}, we adopt the \ac{OMP} algorithm~\cite{4385788}.
First, let $\mathbf{y}_m^{(p)}\in \mathbb{C}^{S_p \times 1}$ denotes the time sample vector, which is defined as
\begin{equation} \label{ts_vector}
\mathbf{y}_m^{(p)}=\bar{\mathbf{y}}_m^{(p)} \odot \boldsymbol{\varrho}_m^{(p)},
\end{equation}
where $\bar{\mathbf{y}}_m^{(p)}=[\bar{y}^{(p)}_{m}(t_{1}),\bar{y}^{(p)}_{m}(t_{2}),\cdots,\bar{y}^{(p)}_{m}(t_{S_{p}})]^{\mathsf{T}}$ is a normalized received signal vector and $\boldsymbol{\varrho}_m^{(p)}=[\varrho^{(p)}_{m}(t_{1}),\varrho^{(p)}_{m}(t_{2}),\cdots,\varrho^{(p)}_{m}(t_{S_{p}})]^{\mathsf{T}}$ is the average amplitude vector, where $\varrho^{(p)}_{m}(t)=\frac{1}{K}\sum^{K}_{k=1}\varrho^{(p)}_{m,k}(t)$.
Note that $S_{p}=N\Delta t/G_{T}$ is the number of time samples in the $p$-th frame, where $G_{T}$ is the sampling interval.
Then, the dictionary matrix $\boldsymbol{\Psi}^{(p)} \in \mathbb{C}^{S_{p} \times G_{f}}$ is given by
\begin{equation} \label{dictionary}
\boldsymbol{\Psi}^{(p)}=\frac{1}{\sqrt{S_p}}\left[\boldsymbol{\psi}^{(p)}_{1}, \boldsymbol{\psi}^{(p)}_{2}, \cdots, \boldsymbol{\psi}^{(p)}_{G_{f}}\right],
\end{equation}
where $\boldsymbol{\psi}^{(p)}_{g}=[e^{j\omega_{g}t_{1}},e^{j\omega_{g}t_{2}},\cdots,e^{j\omega_{g}t_{S_{p}}}]^{\mathsf{T}} \in \mathbb{C}^{S_{p} \times 1}$, and $G_{f}$ denotes the number of frequency grid points with $\omega_g=\frac{2 \pi B_{\mathrm{RAQR}}}{G_{f}}(g-1)$ for $g=1,2,\cdots,G_{f}$.

Eventually, the sparse presentation of the measurement model is rewritten as
\begin{equation} \label{model_OMP}
\mathbf{y}_m^{(p)} \approx \boldsymbol{\Psi}^{(p)} \mathbf{c}_m^{(p)},
\end{equation}
where $\mathbf{c}_m^{(p)} \in \mathbb{C}^{G_{f} \times 1}$ is a $K$-sparse vector whose non-zero entries correspond to the $K$ fluctuation frequencies.

Based on the sparse signal model in \eqref{model_OMP}, we initialize the residual vector $\mathbf{r}_{m}^{(p,0)}=\mathbf{y}_m^{(p)}$ and support set $\mathcal{S}_m^{(p, 0)}=\emptyset$.
Then, we calculate the correlation and detect the new support, $g^{\star}$, which is presented as
\begin{equation} \label{new_support}
g^{\star}=\underset{g\in\{1,2,\cdots G_{f}\}}{\arg \max }\left|\left(\psi_g^{(p)}\right)^\mathsf{H} \mathbf{r}_{m}^{(p,l-1)}\right|,
\end{equation}
where the support set is updated as $\mathcal{S}_{m}^{(p,l)}=\mathcal{S}_{m}^{(p,l-1)} \cup g^{\star}$.

Then, the vector $\hat{\mathbf{c}}_{m}^{(p,l)}$ is calculated through orthogonal projection $\hat{\mathbf{c}}_{m}^{(p,l)}=(\boldsymbol{\Phi}_m^{(p, l)})^{\dagger} \mathbf{y}_m^{(p)}$, where $(\boldsymbol{\Phi}_m^{(p, l)})$ is updated by $\boldsymbol{\Phi}_m^{(p, l)}=[\boldsymbol{\Phi}_m^{(p, l-1)}, \psi_{g^{{\star}}}^{(p)}]$.
After the projection, the residual vector is updated as $\mathbf{r}_{m}^{(p,l)}=\mathbf{y}_m^{(p)}-\boldsymbol{\Phi}_m^{(p, l)} \hat{\mathbf{c}}_{m}^{(p,l)}$.
The above steps are carried out $K$ times, and the estimated fluctuation frequency is given by $\hat{\omega}_{m, k}^{(p)}=\omega_{g^{(\star,k)}}$, where $g^{(\star,k)}$ is the $k$-th element of $\mathcal{S}^{(p,K)}_{m}$. 

Eventually, the final estimated frequency can be acquired by taking the average across $N$ chirp-subcarriers, which is $\hat{\omega}_k^{(p)}=\frac{1}{N} \sum_{m=0}^{N-1} \hat{\omega}_{m, k}^{(p)}$.
\begin{remark} \label{min_chirp} 
     Since the \ac{OMP} is a discretized grid-based algorithm, the fluctuation-frequency measurements generated by any two distinct post-chirps should remain resolvable on the \ac{OMP} grid for all targets within the delay range $\tau_k \in\left[\tau_{\min}, \tau_{\max}\right]$.
     This condition is presented as $2 \Delta \tilde{c}_{1}(\tau_k-\tau_{l}) \geq \delta_\omega$, where $\delta_\omega=\frac{2 \pi B_{\mathrm{RAQR}}}{G_{f}}$ is the gap between adjacent grid points.
     Considering the smallest time delay difference between \ac{LO} and target (i.e., $\tau_{k}=\tau_{\mathrm{min}}$), the minimum chirp-rate difference $\Delta \tilde{c}^{(\mathrm{min})}_{1}$ is given by
\begin{equation} 
\Delta \tilde{c}_1 \geq \frac{\pi B_{\mathrm{RAQR}}}{G_f\left(\tau_{\min }-\tau_l\right)} \triangleq \Delta \tilde{c}_1^{(\min )}.
    \label{min_chirp_diff}
\end{equation}
\end{remark}

\subsection{Joint Delay-Doppler Estimation}\label{Sec-V-B}
With the estimated fluctuation frequency, the joint delay-Doppler estimation problem can be formulated based on the \ac{LS} criterion.
Following the fluctuation frequency model in \eqref{Fluctuation frequency}, the estimated fluctuation frequency vector $\hat{\boldsymbol{\omega}}_{k} \in \mathbb{R}^{P \times 1}$ is presented as 
\begin{equation} \label{est_fluc_freq}
\hat{\boldsymbol{\omega}}_k=\mathbf{C}_1 \boldsymbol{\vartheta}_k,
\end{equation}
where $\hat{\boldsymbol{\omega}}_k=[\hat{\omega}^{(1)}_{k},\hat{\omega}^{(2)}_{k},\cdots,\hat{\omega}^{(P)}_{k}]^{\mathsf{T}}$.
Then, by applying the \ac{LS} method on \eqref{est_fluc_freq}, we can obtain the parameter vector $\boldsymbol{\vartheta}_k$ as 
\begin{equation} \label{parameter_vector}
\boldsymbol{\vartheta}_{k}=\left(\mathbf{C}_1^{\top} \mathbf{C}_1\right)^{-1} \mathbf{C}_1^{\top} \hat{\boldsymbol{\omega}}_{k},
\end{equation}
which yields the estimated delay and Doppler as $\hat{\tau}_{k}=[\boldsymbol{\vartheta}_{k}]_{1}+\tau_{l}$ and $\hat{\nu}_{k}=[\boldsymbol{\vartheta}_{k}]_{2}$, where $[\boldsymbol{\vartheta}_{k}]_{1}$ and $[\boldsymbol{\vartheta}_{k}]_{2}$ are the first and the second elements of $\boldsymbol{\vartheta}_{k}$, respectively.

Eventually, by iteratively conducting this process $K$ times, we can acquire the entire estimated delay and Doppler, which are $\hat{\boldsymbol{\tau}}=[\hat{\tau}_{1},\hat{\tau}_{2},\cdots,\hat{\tau}_{K}]^{\mathsf{T}}$ and $\hat{\boldsymbol{\nu}}=[\hat{\nu}_{1},\hat{\nu}_{2},\cdots,\hat{\nu}_{K}]^{\mathsf{T}}$.

\subsection{Lower Bound Analysis}\label{Sec-V-C}
We analyze the lower bound of the delay-Doppler estimation to evaluate the efficiency of the proposed \ac{MC-AFDM}.
Observation vector for the estimation is denoted as $\bar{\mathbf{y}}\triangleq[\bar{\mathbf{y}}^{(1)},\bar{\mathbf{y}}^{(2)},\cdots,\bar{\mathbf{y}}^{(P)}]^{\mathsf{T}}$, where $\bar{\mathbf{y}}^{(p)}\triangleq [\bar{\mathbf{y}}^{(p)}_{0},\bar{\mathbf{y}}^{(p)}_{1},\cdots,\bar{\mathbf{y}}^{(p)}_{N-1}]^{\mathsf{T}} \in \mathbb{R}^{NS_{p} \times 1}$ with sampled measurement $\bar{\mathbf{y}}_m^{(p)} \triangleq[\bar{y}_m^{(p)}[1], \bar{y}_m^{(p)}[2], \cdots, \bar{y}_m^{(p)}\left[S_p\right]]^\mathsf{T} \in \mathbb{R}^{S_p \times 1}$.
Thereafter, let the parameter vector $\boldsymbol{\eta}$ be defined as
\begin{equation}
\boldsymbol{\eta} \triangleq  {[ \boldsymbol{\vartheta}_1^\mathsf{T}, \boldsymbol{\vartheta}_2^\mathsf{T}, \cdots, \boldsymbol{\vartheta}_K^\mathsf{T}},  h_{l}, \boldsymbol{\varphi}]^{\mathsf{T}}, 
\end{equation}
where $\boldsymbol{\varphi} \in \mathbb{R}^{KNP \times 1}$ is the phase vector that stacks all of the phase $\varphi_{m, k}^{(p)}$ for $m=0,1,\cdots,N-1$ and $k=1,2,\cdots,K$, and $p=1,2,\cdots,P$.

Here, we are interested in the delay $\tau_{k}$ and Doppler $\nu_{k}$, where others are considered as nuisance parameters.
Letting $\boldsymbol{\vartheta}=[\boldsymbol{\vartheta}^{\mathsf{T}}_{1},\boldsymbol{\vartheta}^{\mathsf{T}}_{2},\cdots,\boldsymbol{\vartheta}^{\mathsf{T}}_{K}] \in \mathbb{R}^{2K \times 1}$ and $\boldsymbol{\chi}=[h_{l},\boldsymbol{\varphi}^{\mathsf{T}}]^{\mathsf{T}}$ denote parameter of interest vector and nuisance parameter vector so that $\boldsymbol{\eta}=[\boldsymbol{\vartheta}^{\mathsf{T}},\boldsymbol{\chi}^{\mathsf{T}}]^{\mathsf{T}}$.
The $(i,j)$-th element of corresponding \ac{FIM} of $\boldsymbol{\eta}$ is given by
\begin{align}\label{FIM}
    [\mathbf{J}(\boldsymbol{\eta})]_{(i, j)}&= \mathbb{E}\left\{\frac{\partial \ln p_{\bar{\mathbf{y}}}(\bar{\mathbf{y}} ; \boldsymbol{\eta})}{\partial[\boldsymbol{\eta}]_i} \frac{\partial \ln p_{\bar{\mathbf{y}}}(\bar{\mathbf{y}} ; \boldsymbol{\eta})}{\partial[\boldsymbol{\eta}]_j}\right\} \\&= \sum_{p=1}^P \sum_{m=0}^{N-1} \frac{1}{G_{T}} \int_{T^{(p-1)}}^{T^{(p)}} \frac{\partial \bar{y}_m^{(p)}(t)}{\partial[\boldsymbol{\eta}]_i} \frac{\partial \bar{y}_m^{(p)}(t)}{\partial[\boldsymbol{\eta}]_j}dt,\notag
\end{align}
where $\ln p_{\bar{\mathbf{y}}}(\bar{\mathbf{y}}; \boldsymbol{\eta})$ denotes the \ac{PDF} of the random vector $\bar{\mathbf{y}}$.

Note that the upper $2K \times 2K$ sub-matrix of $\mathbf{J}(\boldsymbol{\eta})$ is of our interest.
Therefore, we define \ac{EFIM}, which is a matrix with reduced dimension of the original \ac{FIM}, but retains the information to acquire the lower bound of specific parameters of interest~\cite {shen2010fundamental}.
Here, the \ac{FIM} in \eqref{FIM} is rewritten as
\begin{equation} \label{FIM_V2}
\mathbf{J}(\boldsymbol{\eta})=\left[\begin{array}{cc}
\mathbf{J}_{\boldsymbol{\vartheta}\boldsymbol{\vartheta}} & \mathbf{J}_{\boldsymbol{\vartheta}\boldsymbol{\chi}} \\
\mathbf{J}_{\boldsymbol{\vartheta}\boldsymbol{\chi}}^\mathsf{T} & \mathbf{J}_{\boldsymbol{\chi}\boldsymbol{\chi}}
\end{array}\right],
\end{equation}
where $\mathbf{J}_{\boldsymbol{\vartheta} \boldsymbol{\vartheta}} \in \mathbb{R}^{2K \times 2K}$, $\mathbf{J}_{\boldsymbol{\vartheta} \boldsymbol{\chi}} \in \mathbb{R}^{2K \times (KNP+1)}$, and $\mathbf{J}_{\boldsymbol{\chi} \boldsymbol{\chi}} \in \mathbb{R}^{(KNP+1) \times (KNP+1)}$ are block matrices, respectively.

Then, by taking the Schur complement of block matrix $\mathbf{J}_{\boldsymbol{\chi} \boldsymbol{\chi}}$, the \ac{EFIM} is given by~\cite{shen2010fundamental}
\begin{equation} \label{EFIM}
\mathbf{J}_{\mathrm{e}}(\boldsymbol{\vartheta})=\mathbf{J}_{\boldsymbol{\vartheta} \boldsymbol{\vartheta}}-\mathbf{J}_{\boldsymbol{\vartheta}\boldsymbol{\chi}}\mathbf{J}^{-1}_{\boldsymbol{\chi}\boldsymbol{\chi}}\mathbf{J}^{\mathsf{T}}_{\boldsymbol{\vartheta}\boldsymbol{\chi}},
\end{equation}
where $\mathbf{J}_{\mathrm{e}}(\boldsymbol{\vartheta})=\mathrm{blkdiag}\{\mathbf{J}_{\mathrm{e}}(\boldsymbol{\vartheta}_{1}),\mathbf{J}_{\mathrm{e}}(\boldsymbol{\vartheta}_{2}),\cdots,\mathbf{J}_{\mathrm{e}}(\boldsymbol{\vartheta}_{K})\}$.

Given this \ac{EFIM} and assuming the frame durations are equal (i.e., $T^{(p)}-T^{(p-1)}=N \Delta t$ for $p=1,2,\cdots,P$), the estimation accuracy of $\tau_{k}$ and $\nu_{k}$ are lower bounded by 
\begin{align}\label{Lower_Bound}
\mathrm{CRLB}(\tau_{k})&=[\mathbf{J}^{-1}_{\mathrm{e}}(\boldsymbol{\vartheta}_{k})]_{(1,1)} \approx \frac{1}{4 P \tilde{\varrho}_k \textrm{Var}\left(\tilde{\mathbf{c}}_1\right)},  \\ \mathrm{CRLB}(\nu_{k})&=[\mathbf{J}^{-1}_{\mathrm{e}}(\boldsymbol{\vartheta}_{k})]_{(2,2)} \approx \frac{\sum_{p=1}^P(\tilde{c}_1^{(p)})^2}{P^2 \tilde{\varrho}_k \operatorname{Var}\left(\tilde{\mathbf{c}}_1\right)}, \label{Lower_Bound_2}
\end{align}
where $\mathrm{CRLB}(\tau_{k})$ and $\mathrm{CRLB}(\nu_{k})$ are the lower bounds of $\tau_{k}$ and $\nu_{k}$, which are typically termed as \ac{CRLB}, respectively.

Here, $\tilde{\varrho}_k=\sum^{N-1}_{m=0}\tilde{\varrho}_{m, k}^{(p)}$ is the sum of information weight of the $k$-th target and $\mathrm{Var}(\mathbf{\tilde{c}}_{1})=\frac{1}{P} \sum_{p=1}^P(\tilde{c}_1^{(p)})^2-(\frac{1}{P} \sum_{p=1}^P \tilde{c}_1^{(p)})^2$ denotes the variance of chirp-rate vector $\mathbf{\tilde{c}}_{1}$.
The detailed derivation of the \ac{CRLB} is presented in Appendix~\ref{LB_Derivation}.
Based on the CRLB equations in \eqref{Lower_Bound}-\eqref{Lower_Bound_2}, the corresponding range and velocity lower bounds are given by
\begin{equation}\label{Lower_Bound_Range}
\mathrm{CRLB}(R_{k})= \frac{c^2}{16 P \tilde{\varrho}_k \textrm{Var}\left(\tilde{\mathbf{c}}_1\right)},
\end{equation}
\begin{equation}\label{Lower_Bound_Velocity}
\mathrm{CRLB}(v_{k})= \frac{\pi^2c^2\sum_{p=1}^P(\tilde{c}_1^{(p)})^2}{\omega^2_{\mathrm{RF}} P^2 \tilde{\varrho}_k \operatorname{Var}\left(\tilde{\mathbf{c}}_1\right)}.
\end{equation}

\section{Simulation Results}\label{Sec-VI}

\begin{table}[t]
\vspace{-2ex}
\rowcolors{1}{white}{blue!10!white}
\centering
\caption{Parameter setting for \ac{RAQRs}}  \label{table: simulation parameters}
\vspace{-0.5ex}
\resizebox{1\linewidth}{!}{
\begin{tabular}{|c|c|c|}
\hline
\textbf{Description} & \textbf{Parameter} & \textbf{Value}\\ \hline
Energy levels  & $\{|1\rangle, |2\rangle\}$ & $\{6S_{1/2},6P_{3/2}\}$ \\
Energy levels  & $\{|3\rangle, |4\rangle\}$ & $\{60D_{5/2},63P_{3/2}\}$ \\
Frequency & $\omega_{\mathrm{RF}}$ & $2\pi \times 62.76\,{\rm{GHz}}$ \\
Transition dipole moments & $\{\mu_{12},\mu_{34}\}$ & $\{2.586,229\}qa_{0}$ \\
Length of vapor cell & $L$ & $0.02\, \mathrm{m}$ \\
Rabi frequencies & $\{\Omega_{p},\Omega_{c}\}$ & $2\pi \times\{5.8,1\} \, \mathrm{MHz}$ \\
Wavelengths & $\{\lambda_{p},\lambda_{c}\}$ & $\{852,509\}\,{\rm{nm}}$ \\
Input power of probe beam & $P_{\mathrm{in}}$ & 120 $\mu\mathrm{W}$ \\
Decay rate & $\gamma_{2}$ & $2\pi \times 5.2\,\rm{MHz}$ \\
Quantum efficiency & $\eta$ & $0.8$ \\
Density of atoms & $N_{0}$ & $4.89 \times 10^{16}\, \mathrm{m^{-3}}$ \\
Bandwidth limitation of \ac{RAQRs} & $B_{\mathrm{RAQR}}$ & $10\,{\rm{MHz}}$ \\
\hline
\end{tabular}
}

\end{table}

\begin{table}[t]

\rowcolors{1}{white}{blue!10!white}
\centering
\caption{Parameter setting for \ac{MC-AFDM}}  \label{table: simulation parameters_2}
\vspace{-0.5ex}
\resizebox{1\linewidth}{!}{
\begin{tabular}{|c|c|c|}
\hline
\textbf{Description} & \textbf{Parameter} & \textbf{Value}\\ \hline
Number of chirp-subcarriers & $N$ & 8 \\
Number of post-chirps & $P$ & 16 \\
Sampling rate & $\Delta t$ & $10 \; \mathrm{ns}$  \\
Pre-chirp parameter & $c_{2}$ & $\sqrt{2}$ \\
Maximum chirp-rate & $\tilde{c}_{1}^{(\mathrm{max}) }$ & $4.65 \times 10^{12}$ \\
Minimum chirp-rate & $\tilde{c}_{1}^{(\mathrm{min}) }$ & $4.65 \times 10^{10}$ \\
Minimum chirp-rate difference & $\Delta \tilde{c}^{(\mathrm{min})}_{1}$ & $6.1 \times 10^{10}$ \\
\hline
\end{tabular}
}

\end{table}

\subsection{Simulation Environments}\label{Sec-VI-A}

Unless specified, the default setting of the simulations is organized in Tables~\ref{table: simulation parameters} and~\ref{table: simulation parameters_2}
.
The number of targets $K$ is set to 3.
The range and velocity of target are randomly distributed over $R_{k} \in [10,10^{3}]~\mathrm m$ and $v_{k} \in [50,300]~\mathrm{m/s}$, while the transmitter-to-\ac{RAQRs} distance is fixed to $1~\mathrm{m}$.
The fluctuation frequency grid point for \ac{OMP} $G_{f}$ is set to $2^{13}$.
The average received \ac{SNR} is defined as 
\begin{equation} \label{SNR_Definition}
\mathrm{SNR}(t) \triangleq \frac{1}{N  P} \sum_{p=1}^P \sum_{m=0}^{N-1} (\varrho_{m}^{(p)}(t))^2  h_l^2 N\Delta t,
\end{equation}
where $\varrho_m^{(p)}(t)=\frac{1}{K} \sum_{k=1}^K \varrho_{m, k}^{(p)}(t)$ is the average amplitude for $K$ targets at the $m$-th subcarrier during the $p$-th frame.
Note that the received \ac{SNR} remains constant when the transmission power is fixed \cite{cui2025realizing}.
For the estimation performance analysis, we adopted the \ac{NRMSE}, which is defined as
\begin{equation} \label{NRMSE}
\mathrm{NRMSE}=\mathbb{E}\left\{\sqrt{\frac{1}{K}\sum_{k=1}^{K}\frac{\left(\hat{\zeta}_{k}-\zeta_{k}\right)^2}{|\zeta_{k}|^2}}\; \right\},
\end{equation}
where $\zeta_{k}$ and $\hat{\zeta}_{k}$ denote a channel parameter and its estimate of the $k$-th target, which are selected as range and velocity (i.e., $\zeta_{k} \in \{R_{k},v_{k}\}$), respectively.

\begin{figure}[t]
\captionsetup[subfigure]{justification=centering}
\begin{center} 
\begin{subfigure}[t]{1\columnwidth}
\includegraphics[width=1\columnwidth]{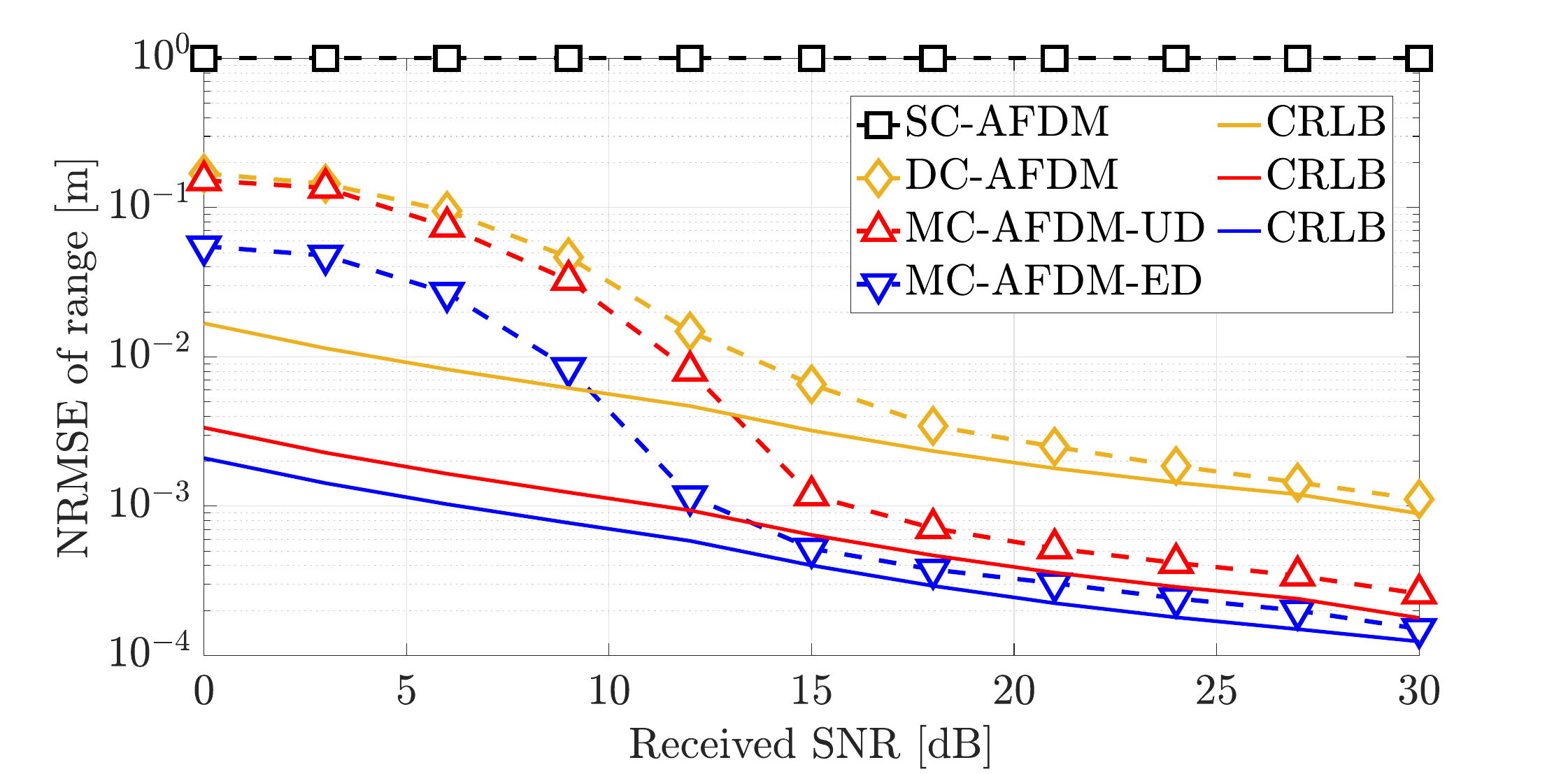}
\caption{\ac{NRMSE} of range according to the received SNR.}
\label{NRMSE_SNR_R}
\end{subfigure}
\begin{subfigure}[t]{1\columnwidth}
\includegraphics[width=1\columnwidth]{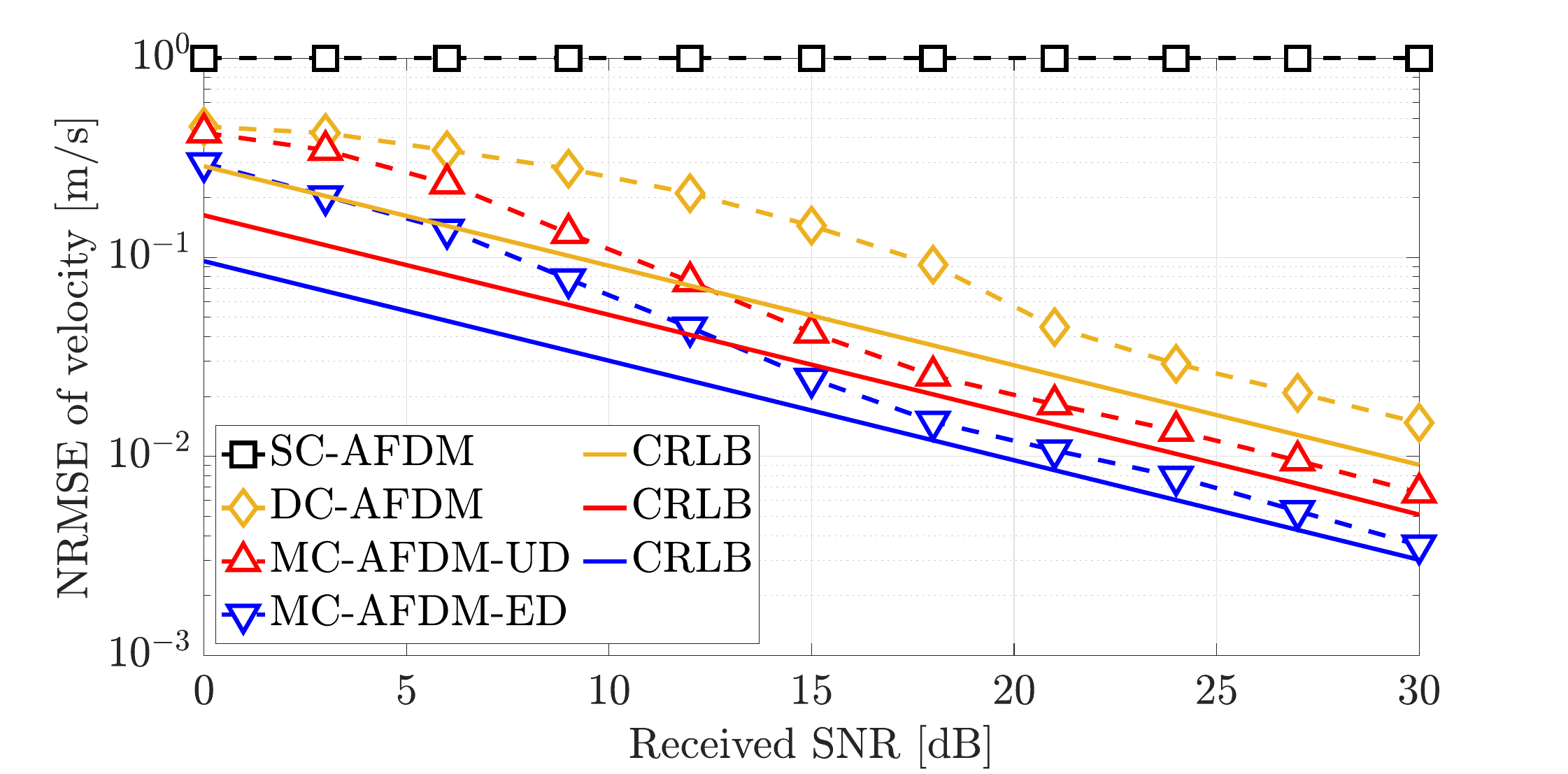}
\caption{\ac{NRMSE} of velocity according to the received SNR.}
\label{NRMSE_SNR_V}
\end{subfigure}
\caption{Range-velocity estimation \ac{NRMSE} performance analysis versus received \ac{SNR}.}\label{NRMSE_SNR} 
\end{center}
\vspace{-3.8ex}
\end{figure}

\subsection{NRMSE Performance Analysis}\label{Sec-VI-B}
For the \ac{NRMSE} performance analysis, the proposed \ac{MC-AFDM} is evaluated with two different post-chirp distributions.
The first is uniform distribution of chirp-rate vector elements (i.e., $\tilde{c}_1^{(p)}=\tilde{c}_1^{(\min )}+\frac{p(\tilde{c}_1^{(\max )}-\tilde{c}_1^{(\min )})}{P}$) for $p=1,2,\cdots,P$, which is represented as \textit{MC-AFDM-UD}.
The second is the edge distribution that follows Theorem~\ref{Theorem_1} and adopts the minimum chirp-rate difference $\Delta \tilde{c}^{(\mathrm{min})}_{1}$ in Table~\ref{table: simulation parameters_2}, which is labeled \textit{MC-AFDM-ED}.
Three benchmark schemes are considered in the performance comparison, which are
%

\begin{itemize}
   \item \textit{SC-AFDM}: Conventional \ac{SC-AFDM} is established by assuming the post chirp difference for \ac{SC-AFDM} as $\Delta c^{(SC)}_{1}=c^{(A)}_{1}-c^{(B)}_{1} \approx 0$ since the identical chirp rate (i.e., $\Delta c^{(SC)}_{1}=0$) results in a rank-1 matrix of $\mathbf{C}_{1}$, which limits delay-Doppler estimation of \ac{SC-AFDM}.
   Here, the post chirp difference for \ac{SC-AFDM} is $\Delta c^{(SC)}_{1}=10^{-3}$.
   \item \textit{DC-AFDM}: \ac{DC-AFDM} is a simplified version of \ac{MC-AFDM}, which exploits only two different post-chirps~\cite{kim2026dual}. 
   \item \textit{CRLB}: \ac{CRLB} of range and velocity estimation, derived in \eqref{Lower_Bound_Range} and \eqref{Lower_Bound_Velocity}, are adopted to evaluate the theoretical lower bounds.
   The \ac{CRLB}s are calculated for \ac{DC-AFDM}, MC-AFDM-UD, and MC-AFDM-ED, where their respective CRLBs are presented as solid lines.
   %
\end{itemize}
%
%

Fig.~\ref{NRMSE_SNR} shows the NRMSE performance versus the received \ac{SNR}.
As illustrated in Fig.~\ref{NRMSE_SNR}, the proposed \ac{MC-AFDM} exhibits superior estimation accuracy in both range and velocity for all received \ac{SNR}s.
Especially, an interesting observation is that the estimation performance of MC-AFDM-ED is more accurate compared to the MC-AFDM-UD in all received SNRs, showing the effectiveness of the edge distribution of chirp-rates.
Furthermore, this result shows that the multiple post-chirps of \ac{MC-AFDM} reduce the condition number $\kappa(\mathbf{C_{1}})$ in \eqref{condition_number}, thereby enhancing the estimation accuracy compared to \ac{DC-AFDM}.
The NRMSE of \ac{SC-AFDM} is the worst since the single-chirp cannot solve the optical ambiguity problem, severely degrading the estimation performance.

\begin{figure}[t]
\captionsetup[subfigure]{justification=centering}
\begin{center} 
\begin{subfigure}[t]{1\columnwidth}
\includegraphics[width=1\columnwidth]{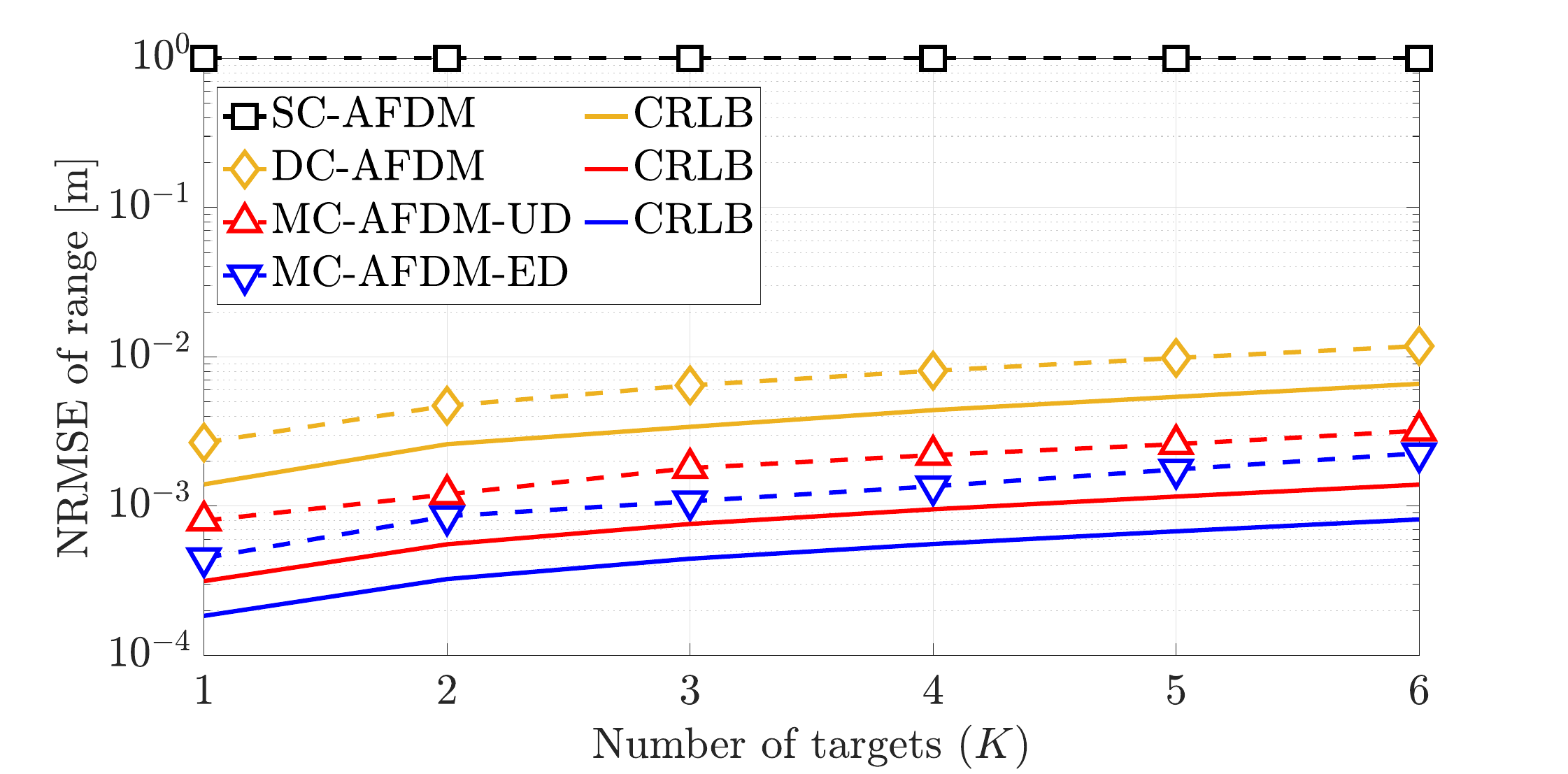}
\caption{\ac{NRMSE} of range according to the number of targets.}
\label{NRMSE_Target_R}
\end{subfigure}
\begin{subfigure}[t]{1\columnwidth}
\includegraphics[width=1\columnwidth]{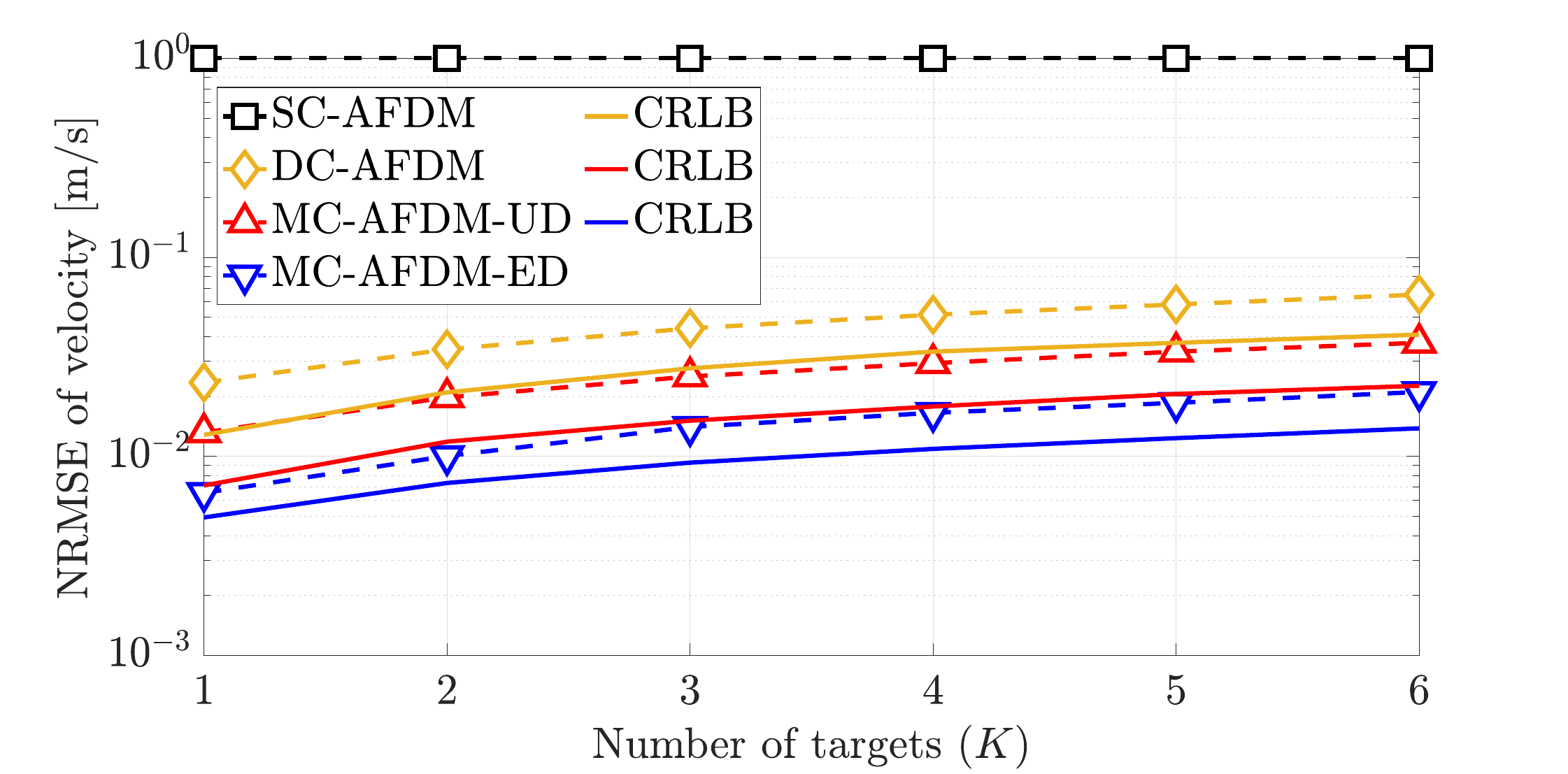}
\caption{\ac{NRMSE} of velocity according to the number of targets.}
\label{NRMSE_Target_V}
\end{subfigure}
\caption{Range-velocity estimation \ac{NRMSE} performance analysis versus the number of targets.}\label{NRMSE_Target} 
\end{center}
\vspace{-3.8ex}
\end{figure}

To analyze the delay and Doppler estimation performance with respect to the number of targets, the \ac{NRMSE} result of range and velocity versus the number of targets is shown in Fig.~\ref{NRMSE_Target}.
Here, the received \ac{SNR} is set to 20 dB.
As shown in Fig.~\ref{NRMSE_Target}, the proposed \ac{MC-AFDM} outperforms all the benchmark schemes for all numbers of targets, validating the effectiveness in the multi-target scenario.
In particular, the edge distribution of post-chirp rates reduces the \ac{NRMSE} of range and velocity compared to the uniform distribution, on the order of 2.5-fold and 4-fold, respectively, which validates the advantage of edge distribution chirp-matrix design for delay-Doppler estimation in the multi-target scenario. 

\begin{figure}[t]
\captionsetup[subfigure]{justification=centering}
\begin{center} 
\begin{subfigure}[t]{1\columnwidth}
\includegraphics[width=1\columnwidth]{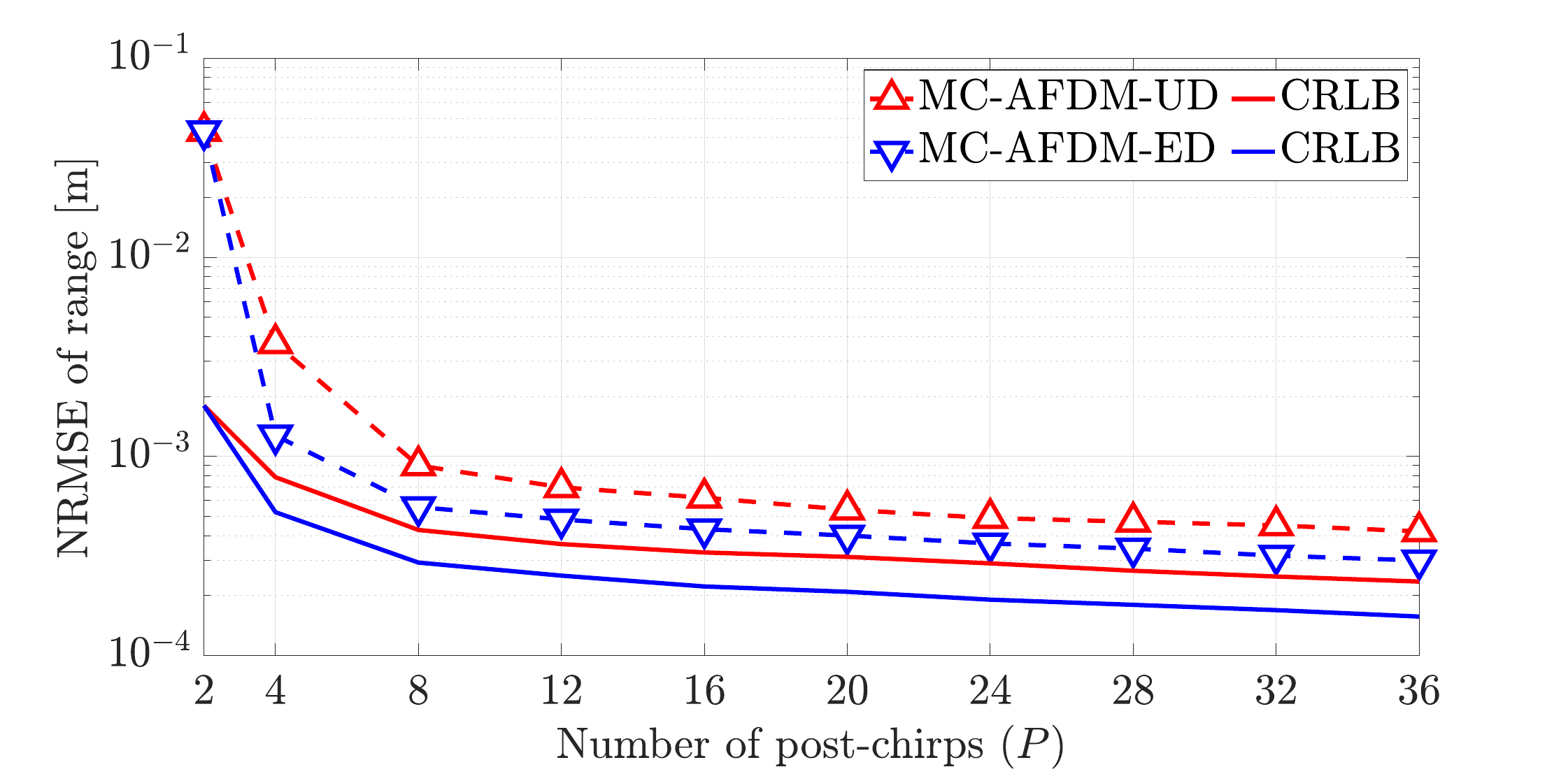}
\caption{\ac{NRMSE} of range according to the number of post-chirps.}
\label{NRMSE_Chirp_R}
\end{subfigure}
\begin{subfigure}[t]{1\columnwidth}
\includegraphics[width=1\columnwidth]{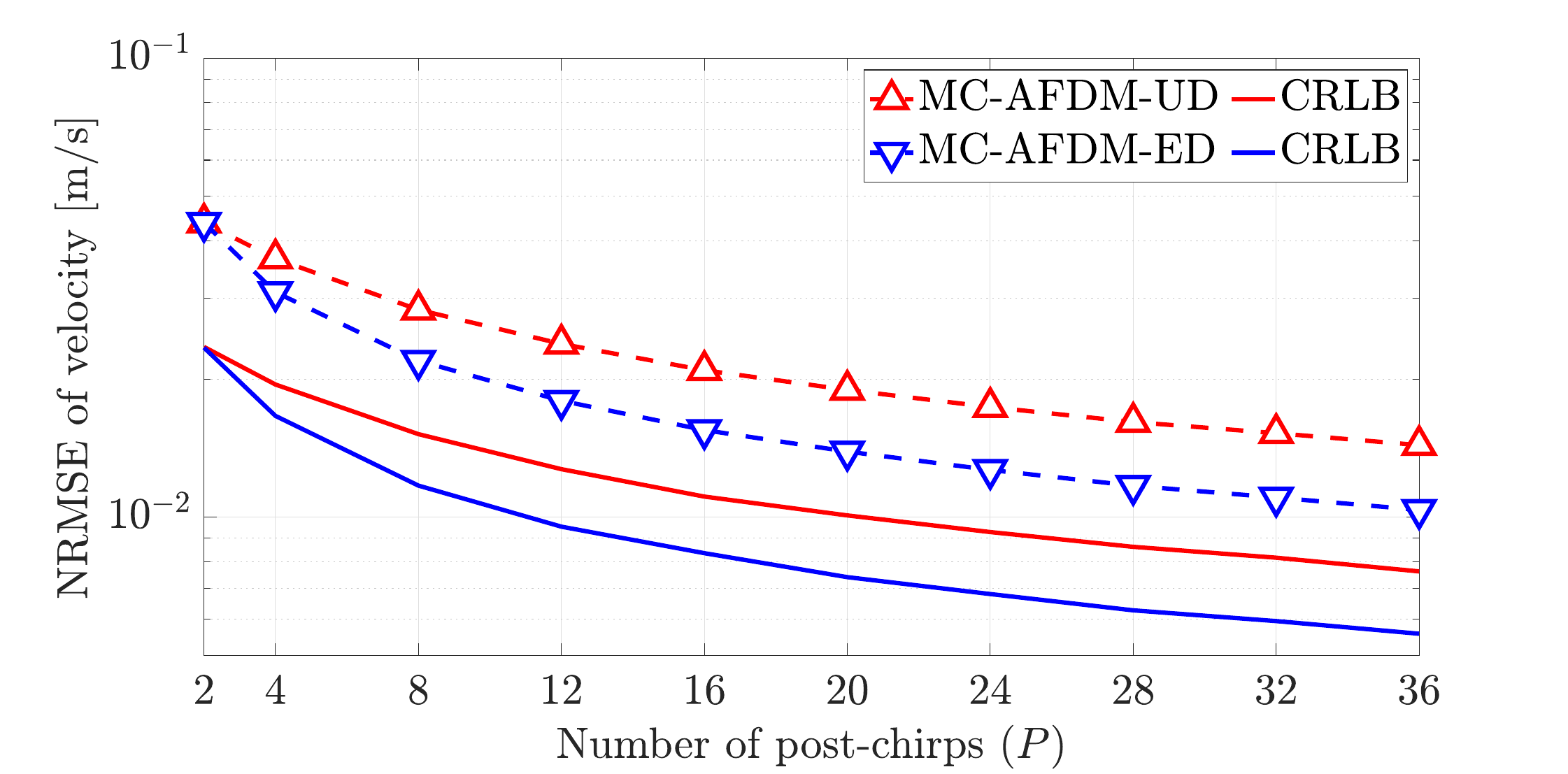}
\caption{\ac{NRMSE} of velocity according to the number of post-chirps.}
\label{NRMSE_Chirp_V}
\end{subfigure}
\caption{Range-velocity estimation \ac{NRMSE} performance analysis versus the number of post-chirps.}\label{NRMSE_Chirp} 
\end{center}
\vspace{-3.8ex}
\end{figure}

Eventually, Fig.~\ref{NRMSE_Chirp} shows the \ac{NRMSE} performance according to the varying number of post-chirps.
Note that \ac{SC-AFDM} and \ac{DC-AFDM} are excluded in this analysis since their number of post-chirps is fixed as one and two, respectively.
The received \ac{SNR} is set to 20 dB.
The experimental results show that both range and velocity estimation performance increase along with the number of post-chirps.
This is because the objective parameter $D_{2}$ in Appendix~\ref{app2} is proportional to $P$, which reduces the condition number $\kappa\left(\mathbf{C}_1\right)$ as $P$ increases.
Furthermore, as can be observed in \eqref{Lower_Bound_Range} and \eqref{Lower_Bound_Velocity}, increasing $P$ directly lowers the theoretical estimation lower bound, indicating that a larger number of post-chirps provides more informative measurements for the LS recovery.

In summary, these results demonstrate the effectiveness of the proposed \ac{MC-AFDM} for delay-Doppler estimation with \ac{RAQRs} in the multi-target scenario.
Especially, by distributing the multiple post-chirps to the edge of its minimum and maximum values and increasing its numbers, the delay-Doppler estimation accuracy can be further improved, unveiling the potential of \ac{RAQRs} for quantum sensing in a \ac{DD} channel.

\section{Conclusion}\label{Sec-VII}
In this paper, we proposed \ac{MC-AFDM} for joint delay-Doppler estimation in the \ac{DD} channel with \ac{RAQRs}.
Leveraging multiple post-chirps within different time frames, the proposed \ac{MC-AFDM} generates diverse optical measurements that resolve the optical ambiguity of conventional \ac{SC-AFDM}, ensuring the full-rank condition of the post-chirp matrix.
Furthermore, we have revealed that the delay-Doppler estimation accuracy could be further improved when the variance of the chirp rate vector is maximized, which can be achieved by distributing the post-chirps to the edges of their maximum and minimum values.
Numerical results validated the superior delay-Doppler estimation performance of \ac{MC-AFDM} in the multi-target scenarios compared to \ac{SC-AFDM} and \ac{DC-AFDM}, and have shown the advantage of edge distribution of post-chirps for enhancing the accuracy.

This study addressed a new research branch of \ac{RAQRs}, which is waveform design for \ac{RAQRs}-enabled quantum sensing.
The proposed \ac{MC-AFDM} can be extended to several promising research avenues, including \ac{AFDM}-based \ac{ISAC} with \ac{RAQRs}, channel estimation, and integration with \ac{MIMO} systems.
Furthermore, adopting the multi-band capability of \ac{RAQRs} with \ac{AFDM} can be considered to enhance the spectral efficiency of communications.

\begin{appendices}

\section{Proof of Theorem 1} \label{app2}
The positive definite matrix $\mathbf{C}_1^{\mathsf{T}} \mathbf{C}_1$ is given by
\begin{equation} \label{pd_matrix}
\mathbf{C}_1^{\mathsf{T}} \mathbf{C}_1=\left[\begin{array}{cc}
4 \sum_{p=1}^P (\tilde{c}_1^{(p)})^2 & -2 \sum_{p=1}^P \tilde{c}_1^{(p)} \\
-2 \sum_{p=1}^P \tilde{c}_1^{(p)} & P
\end{array}\right].
\end{equation}
Here, the eigenvalues of $\mathbf{C}_1^{\mathsf{T}} \mathbf{C}_1$, $\lambda_{1}$ and $\lambda_{2}$, can be calculated by eigenvalue-decomposition, which are presented as
\begin{equation} \label{eigenvalues}
\lambda_{1}=\frac{D_{1} + \sqrt{D_{1}^2-4 D_{2}}}{2}, \ \lambda_{2}=\frac{D_{1} - \sqrt{D_{1}^2-4 D_{2}}}{2},
\end{equation}
where $D_{1}$ and $D_{2}$ are given by
\begin{equation}\label{c1c2}
D_{1}=4 \sum_{p=1}^P (\tilde{c}_1^{(p)})^2+P,  
\end{equation}
\begin{equation}\label{c1c2_2}
D_{2}=4\left(P \sum_{p=1}^P (\tilde{c}_1^{(p)})^2-\left(\sum_{p=1}^P \tilde{c}_1^{(p)}\right)^2\right). 
\end{equation}
%
Given $D_{1}=\lambda_{1}+\lambda_{2}$ and $D_{2}=\lambda_{1}\lambda_{2}$, we formulate the equation as follows
\begin{equation} \label{opt_obj}
\left(\kappa(\mathbf{C}_{1})+(\kappa(\mathbf{C}_{1}))^{-1}\right)^2=\frac{\left(\lambda_1+\lambda_2\right)^2}{\lambda_1 \lambda_2}=\frac{D_1^2}{D_2}.
\end{equation}
Since $\kappa(\mathbf{C}_{1})+(\kappa(\mathbf{C}_{1}))^{-1}$ is monotonically increasing with $\kappa(\mathbf{C}_{1})$ for $\kappa(\mathbf{C}_{1}) \geq 1$, the objective is simplified as
\begin{equation} \label{opt_obj_v2}
\min \kappa\left(\mathbf{C}_1\right) \Longleftrightarrow \min \frac{D_1}{\sqrt{D_2}} .
\end{equation}
%
%
Note that $D_{2}$ can be represented as $D_{2}=4 P^2 \mathrm{Var}(\mathbf{\tilde{c}}_{1})$.
Considering a fixed number of post-chirp $P$, the optimization problem in \eqref{opt_obj_v2} can be solved by maximizing the $\mathrm{Var}(\mathbf{\tilde{c}}_{1})$, which can be achieved by spreading the elements of $\mathbf{\tilde{c}}_{1}$ as far as possible within constraints for \ac{MC-AFDM}.

\section{Lower Bound Analysis} \label{LB_Derivation}
The partial derivatives of $\bar{y}^{(p)}_{m}(t)$ with respect to parameters in $\boldsymbol{\eta}$ are given by
\begin{equation}\label{Partial_1}
\frac{\partial \bar{y}_m^{(p)}(t)}{\partial[\boldsymbol{\vartheta}_{k}]_{1}}=-2 \tilde{c}_1^{(p)} t \varrho_{m, k}^{(p)}(t) h_l \sin (\omega_{m, k}^{(p)} t+\varphi_{m, k}^{(p)}),
\end{equation}
\begin{equation}\label{Partial_2}
\frac{\partial \bar{y}_m^{(p)}(t)}{\partial[\boldsymbol{\vartheta}_{k}]_{2}}=t \varrho_{m, k}^{(p)}(t) h_l \sin (\omega_{m, k}^{(p)} t+\varphi_{m, k}^{(p)}),
\end{equation}
\begin{equation}\label{Partial_3}
\frac{\partial \bar{y}_m^{(p)}(t)}{\partial h_{l}}=\sum_{k=1}^K \varrho_{m, k}^{(p)}(t) \cos (\omega_{m, k}^{(p)} t+\varphi_{m, k}^{(p)}),
\end{equation}
\begin{equation}\label{Partial_4}
\frac{\partial \bar{y}_m^{(p)}(t)}{\partial\varphi_{m, k}^{(p)}}=-\varrho_{m, k}^{(p)}(t) h_l \sin (\omega_{m, k}^{(p)} t+\varphi_{m, k}^{(p)}).
\end{equation}
%

In the self-heterodyne sensing system, the fluctuation frequency $\omega^{(p)}_{m,k}$ is typically very large compared to the time-varying amplitude $\varrho_{m, k}^{(p)}(t)$ \cite{cui2025realizing}.
Owing to this characteristic, the following asymptotic approximations hold
\small\begin{equation}\label{approx_1}
\int_{T^{(p-1)}}^{T^{(p)}} (\varrho_{m, k}^{(p)}(t))^2 \sin ^2(g^{(p)}_{m,k}(t)) d t \approx \frac{1}{2} \int_{T^{(p-1)}}^{T^{(p)}} (\varrho_{m, k}^{(p)}(t))^2 d t, 
\end{equation}
\small\begin{equation}\label{approx_2}
\int_{T^{(p-1)}}^{T^{(p)}} (\varrho_{m, k}^{(p)}(t))^2 \sin (g^{(p)}_{m,k}(t)) \cos (g^{(p)}_{m,k}(t)) d t \approx 0,
\end{equation}
\small\begin{equation}\label{approx_3}
\int_{T^{(p-1)}}^{T^{(p)}} \varrho_{m, k}^{(p)} \varrho_{m, k^{\prime}}^{(p)} f_{m, k}^{(p)}(t)  f_{m, k^{\prime}}^{(p)}(t) d t \approx 0,\quad k \neq k^{\prime} 
\end{equation}
where $f_{m, k}^{(p)}(t) \in\{\sin (g^{(p)}_{m,k}(t)), \cos (g^{(p)}_{m,k}(t))\}$ and $g^{(p)}_{m,k}(t)=\omega^{(p)}_{m,k}t+\varphi^{(p)}_{m,k}$.
Given the above, by substituting \eqref{Partial_1}-\eqref{Partial_4} into \eqref{FIM} and applying the approximations in \eqref{approx_1}-\eqref{approx_3}, the block matrices of \ac{FIM} in \eqref{FIM_V2} are calculated as follows
\begin{equation}\label{FIM_V4} 
\left[\mathbf{J}_{\vartheta \vartheta}\right]_{(k, k)}=\sum_{p=1}^P \sum_{m=0}^{N-1} \tilde{\varrho}^{(p)}_{m,k,2}\left[\begin{array}{cc}
(2 \tilde{c}_1^{(p)})^2 & -2 \tilde{c}_1^{(p)} \\
-2 \tilde{c}_1^{(p)} & 1
\end{array}\right], 
\end{equation}
\begin{equation}\label{FIM_V5} 
\mathbf{J}_{\chi \chi}=\left[\begin{array}{ll}
\sum_{k=1}^K \sum_{p=1}^P \sum_{m=0}^{N-1} \frac{\tilde{\varrho}_{m, k,0}^{(p)}}{h_l^2} & \boldsymbol{0}^{\mathsf{T}}_{KNP,1} \\
\boldsymbol{0}_{KNP, 1} & \mathbf{J}_{\varphi \varphi}
\end{array}\right],
\end{equation}
\begin{equation}\label{FIM_V6} 
\mathbf{J}_{\vartheta \chi}=\left[\boldsymbol{0}_{KNP, 1}, \mathrm{blkdiag}\{\mathbf{J}_{\vartheta \varphi,k}\}\right],
\end{equation}
where $\mathbf{J}_{\varphi \varphi}=\operatorname{diag}(\tilde{\varrho}_{m, k,0}^{(p)})^{(N-1,K,P{})}_{m=0, k=1, p=1} \in \mathbb{R}^{KNP \times KNP}$. The entries of the sub-block matrix $\mathbf{J}_{\vartheta \varphi,k} \in \mathbb{R}^{2 \times NP}$ are given by $\left[\mathbf{J}_{\vartheta \varphi, k}\right]_{(1,m p)}=2 \tilde{c}_1^{(p)} \tilde{\varrho}_{m, k,1}^{(p)}$ and $\left[\mathbf{J}_{\vartheta \varphi, k}\right]_{(2,m p)}=-\tilde{\varrho}_{m,k,1}^{(p)}$, respectively.
Here, parameters $\tilde{\varrho}_{m, k,0}^{(p)}$,$\tilde{\varrho}_{m, k,1}^{(p)}$, and $\tilde{\varrho}_{m, k,2}^{(p)}$ are given by
\begin{equation}\label{rho_0} 
\tilde{\varrho}_{m, k,0}^{(p)} \triangleq \frac{h_l^2}{2 G_T} \int_{T^{(p-1)}}^{T^{(p)}}(\varrho_{m, k}^{(p)}(t))^2 dt,
\end{equation}
\begin{equation}\label{rho_1} 
\tilde{\varrho}_{m, k,1}^{(p)} \triangleq \frac{h_l^2}{2 G_T} \int_{T^{(p-1)}}^{T^{(p)}}t(\varrho_{m, k}^{(p)}(t))^2 dt,
\end{equation}
\begin{equation}\label{rho_2} 
\tilde{\varrho}_{m, k,2}^{(p)} \triangleq \frac{h_l^2}{2 G_T} \int_{T^{(p-1)}}^{T^{(p)}} t^2 (\varrho_{m, k}^{(p)}(t))^2 dt.
\end{equation}

By substituting \eqref{FIM_V4}-\eqref{FIM_V6} and conducting a few calculations, the \ac{EFIM} in \eqref{EFIM} is obtained as $\mathbf{J}_{\mathrm{e}}(\boldsymbol{\vartheta})=\mathrm{blkdiag}\{\mathbf{J}_{\mathrm{e}}(\boldsymbol{\vartheta}_{1}),\mathbf{J}_{\mathrm{e}}(\boldsymbol{\vartheta}_{2}),\cdots,\mathbf{J}_{\mathrm{e}}(\boldsymbol{\vartheta}_{K})\}$, where the sub-matrix $\mathbf{J}_{\mathrm{e}}(\boldsymbol{\vartheta}_{k})$ is given by
\begin{equation} \label{EFIM_Fin}
\mathbf{J}_e\left(\boldsymbol{\vartheta}_k\right)=\sum_{p=1}^P \sum_{m=0}^{N-1} \tilde{\varrho}_{m, k}^{(p)}\left[\begin{array}{cc}
\left(2 \tilde{c}_1^{(p)}\right)^2 & -2 \tilde{c}_1^{(p)} \\
-2 \tilde{c}_1^{(p)} & 1
\end{array}\right],
\end{equation}
where $\tilde{\varrho}_{m, k}^{(p)}=\tilde{\varrho}_{m, k,2}^{(p)}-\frac{(\tilde{\varrho}_{m, k,1}^{(p)})^2}{\tilde{\varrho}_{m, k,0}^{(p)}}$ denotes the information weight of the \ac{EFIM}.
Here, \ac{EFIM} in \eqref{EFIM_Fin} can be rewritten with the post-chirp matrix $\mathbf{C}_1$ in \eqref{Fluctuation frequency}, which is presented as $\mathbf{J}_e(\boldsymbol{\vartheta}_k)=\mathbf{C}_1^\mathsf{T} \mathbf{D}_k \mathbf{C}_1$,
where $\mathbf{D}_k=\mathrm{diag}\{\tilde{\varrho}_k^{(1)}, \tilde{\varrho}_k^{(2)} \cdots, \tilde{\varrho}_k^{(P)}\} \in \mathbb{R}^{P \times P}$ denotes the diagonal matrix and its diagonal element is the sum of information weights $\tilde{\varrho}_k^{(p)}=\sum^{N-1}_{m=0}\tilde{\varrho}_{m, k}^{(p)}$.
Considering the uniform frame duration assumption, the sum of information weight is equivalent for all $p$ (i.e., $\tilde{\varrho}_k^{(1)}=\tilde{\varrho}_k^{(2)}=\cdots=\tilde{\varrho}_k^{(P)}$).
Therefore, we omit the subscript $p$ so that $\tilde{\varrho}_k^{(p)}=\tilde{\varrho}_k \;\forall\; p$.
Then, by taking the inverse of $\mathbf{J}_e\left(\boldsymbol{\vartheta}_k\right)$, the \ac{CRLB} of $\tau_{k}$ and $\nu_{k}$ are given by 
\begin{align}\label{Lower_Bound_Fin}
\mathrm{CRLB}(\tau_{k})&=[\mathbf{J}^{-1}_{\mathrm{e}}(\boldsymbol{\vartheta}_{k})]_{(1,1)}=\frac{P \tilde{\varrho}_k}{\mathrm{det}\left(\mathbf{C}_1^\mathsf{T} \mathbf{D}_k \mathbf{C}_1\right)},  \\ \mathrm{CRLB}(\nu_{k})&=[\mathbf{J}^{-1}_{\mathrm{e}}(\boldsymbol{\vartheta}_{k})]_{(2,2)}=\frac{4 \tilde{\varrho}_k\sum^{P}_{p=1}(\tilde{c}^{(p)}_{1})^2}{\mathrm{det}\left(\mathbf{C}_1^\mathsf{T} \mathbf{D}_k \mathbf{C}_1\right)},
\end{align}
where $\mathrm{det}\left(\mathbf{C}_1^\mathsf{T} \mathbf{D}_k \mathbf{C}_1\right)=4P^2\tilde{\varrho}^2_k\mathrm{Var}(\tilde{\mathbf{c}}_{1})$.
This completes the derivation of lower bound.

\end{appendices}


\end{document}